\documentclass[manuscript,screen,nonacm]{acmart}

\usepackage{bm}
\usepackage{xcolor}
\usepackage{xspace}
\usepackage{enumitem}
\usepackage{tcolorbox}
\usepackage[noend]{algpseudocode}

\tcbuselibrary{breakable}

\acmConference[PODC '21]{ACM Symposium on Principles of Distributed Computing}{July 2021}{Virtual Event}
\acmYear{2021}
\copyrightyear{2021}

\widowpenalty=1
\clubpenalty=1
\brokenpenalty=1

\setcitestyle{numbers,sort&compress}

\iffalse



\newcommand{\highlight}[2][red]{{\color{#1} #2}}


\newcommand{\marginnote}[2][teal]{\marginpar{\footnotesize\color{#1} #2}}

\else


\newcommand{\highlight}[2][black]{{\color{black} #2}}

\newcommand{\marginnote}[2][white]{\marginpar{\phantom{\color{white} #2}}}

\fi

\begin{document}

\newtheorem{claim}{Claim}[theorem] 
\newtheorem*{claim*}{Claim} 


\title{Tight Trade-off in Contention Resolution without Collision Detection}

\author{Haimin Chen}
\orcid{}
\affiliation{
	\department{State Key Laboratory for Novel Software Technology}
	\institution{Nanjing University}
	\country{China}}
\email{haimin.chen@smail.nju.edu.cn}

\author{Yonggang Jiang}
\orcid{}
\affiliation{
	\department{State Key Laboratory for Novel Software Technology}
	\institution{Nanjing University}
	\country{China}}
\email{yonggangjiang@smail.nju.edu.cn}

\author{Chaodong Zheng}
\orcid{}
\affiliation{
	\department{State Key Laboratory for Novel Software Technology}
	\institution{Nanjing University}
	\country{China}}
\email{chaodong@nju.edu.cn}

\begin{abstract}
In this paper, we consider contention resolution on a multiple-access communication channel. In this problem, a set of nodes arrive over time, each with a message it intends to send. In each time slot, each node may attempt to broadcast its message or remain idle. If a single node broadcasts in a slot, the message is received by all nodes; otherwise, if multiple nodes broadcast simultaneously, a collision occurs and none succeeds. If collision detection is available, nodes can differentiate collision and silence (i.e., no nodes broadcast). Performance of contention resolution algorithms is often measured by throughput---the number of successful transmissions within a period of time; whereas robustness is often measured by jamming resistance---a jammed slot always generates a collision. Previous work has shown, with collision detection, optimal constant throughput can be attained, even if a constant fraction of all slots are jammed. The situation when collision detection is not available, however, remains unclear.

In a recent breakthrough paper [Bender et al., STOC '20], a crucial case is resolved: constant throughput is possible without collision detection, but only if there is no jamming. Nonetheless, the exact trade-off between the best possible throughput and the severity of jamming remains unknown. In this paper, we address this open question. Specifically, for any level of jamming ranging from none to constant fraction, we prove an upper bound on the best possible throughput, along with an algorithm attaining that bound. An immediate and interesting implication of our result is, when constant fraction of all slots are jammed, which is the worst-case scenario, there still exists an algorithm achieving a decent throughput: $\Theta(t/\log{t})$ messages could be successfully transmitted within $t$ slots.
\end{abstract}

\maketitle


\section{Introduction}\label{sec-intro}

\emph{Contention resolution} is a classical problem in parallel and distributed computing. In this problem, there are multiple players trying to access a shared resource; and the problem is solved once each player has successfully accessed the resource (at least) once. This problem is complicated by the requirement that accesses must be \emph{mutual exclusive}: if two or more players try to utilize the shared resource simultaneously, none would succeed. Contention resolution saw various applications in computer science, some prominent examples include congestion control in computer networking (e.g., Ethernet and IEEE 802.11 wireless networks \cite{kurose17,metcalfe76}), concurrency control in database management systems and operation systems (e.g., locking \cite{ramakrishnan02,tanenbaum14}).\marginnote{More examples and refs?}

In this paper, we consider the following more concrete setting, which is a standard model used in many papers studying the problem (see, e.g., \cite{bender18,bender20}). The shared resource is a \emph{multiple-access communication channel}, and time is divided into discrete and synchronized \emph{slots}. Each player (also called a \emph{node}) joins the system at the beginning of some slot with a single \emph{message} (also called a \emph{packet}) it intends to send.  We assume node arrival times are controlled by an adaptive adversary. In each slot, each node can either try to broadcast its message on the channel or remain idle. In case a single node broadcasts, the transmission succeeds and all participating nodes receive the unique message; otherwise, if multiple nodes broadcast simultaneously in a slot, a \emph{collision} occurs and all transmission attempts in that slot fail. Upon collision, the exact channel feedback depends on the availability of a \emph{collision detection} mechanism. Specifically, with collision detection, nodes can tell whether a wasted slot (i.e., a slot without a successful message transmission) is due to silence (i.e., no node broadcasts in this slot) or collision (i.e., multiple nodes broadcast in this slot). By contrast, if collision detection is not available, nodes cannot differentiate silent slots and colliding slots.

Achieving a high \emph{throughput} is the primary goal of contention resolution algorithms. Though exact definition of ``throughput'' differ in various papers, intuitively it measures how many messages an algorithm can successfully transmit within a period of time.

Transmission attempts could fail even without collision, as the shared communication channel could suffer from hardware/software errors (e.g., a shared printer may crash due to hardware/software bugs) or unintentional/intentional external interference (e.g., a wireless link may be affected by electromagnetic noise). In the context of contention resolution, such failures are often modeled by \emph{jamming} (see, e.g., \cite{awerbuch08,richa10,chang18,bender18}). Formally, if a slot is jammed, then a collision occurs on the channel in that slot, regardless of the actual number of broadcasting nodes.

An important question in studying various contention resolution algorithms is to understand their robustness against jamming. More fundamentally, in solving contention resolution, what is the inherent trade-off between throughput---measuring performance, and jamming-resistance---measuring fault-tolerance? As it turns out, the availability of collision detection makes a huge difference. More specifically, in case collision detection is available, even if a constant fraction of all slots could be jammed (which is the asymptotic worst-case scenario), constant throughput (which is the asymptotic optimal throughput) can be attained~\cite{bender18,chang18}. By contrast, in the absence of collision detection, even without jamming, whether optimal throughput is possible for the vanilla contention resolution problem remains unknown for a long time. Last year, in a breakthrough paper~\cite{bender20}, Bender et al.\ showed that such algorithm does exist. Moreover, they have also proved an impossibility result implying constant throughput is impossible when constant fraction of slots could be jammed. However, an intriguing question remains open: in solving contention resolution, what is the exact trade-off between throughput and jamming-resistance, when collision detection is not available?

\smallskip\noindent\textbf{Main contribution.} In this paper, we answer the above question by providing a tight and complete characterization of the trade-off between the best possible throughput and the severity of jamming. In particular, for any level of jamming ranging from none to constant fraction, we have an upper bound on the best possible throughput, along with a corresponding algorithm attaining that bound. Together with previous work, this marks the complete understanding of the throughput versus jamming-resistance trade-off for the contention resolution problem.

In the reminder of this section, we will discuss some additional model details, then proceed to a more careful introduction of our main results. We will conclude this section with a brief survey on related work.

\smallskip\noindent\textbf{Additional model details.} We assume each node has a single message to send, and a node will leave the system immediately once its message has been successfully transmitted. We assume node arrival times and jamming are both controlled by an adaptive adversary. We often call this adversary Eve, and her adaptivity is reflected by the assumption that, in each slot, she could use past channel feedback to determine whether to jam the slot, and whether to inject (one or more) new nodes. Notice, the multiple-access channel provide identical feedback to the nodes and the adversary, meaning Eve also does not posses the ability of collision detection.

\smallskip\noindent\textbf{Statement of results.} Throughout this paper, we call a slot \emph{active} if there is at least one player in the system in that slot. Following classical definition of throughput is adopted from \cite{bender20}: let $n_t$ be the number of players arriving the system in the first $t$ slots, and let $a_t$ be the number of active slots among the first $t$ slots, the $throughput$ at slot $t$ is defined to be $n_t/a_t$. An algorithm achieves a certain throughput $\lambda$ if for all slots the throughput is lower bounded by $\lambda$. That is to say, for instance, if an algorithm achieves a constant throughput, then the number of active slots is at most some constant factor larger than the number of player arrivals.

In this paper, we extend the above definition to better reflect the influence of jamming, see following.

\begin{definition}[$(f,g)$-throughput]\label{def:f-g-throughput}
Let $\mathcal{A}$ be the algorithm each node runs after arriving. Denote the number of newly arrived players and the number of jammed slots in the first $t$ slots as $n_t$ and $d_t$, respectively. Let $f,g:\mathbb{R}^+\rightarrow\mathbb{R}^+$ be two functions. We say $\mathcal{A}$ achieves \emph{$(f,g)$-throughput} if for any integer $t\ge 1$ and any adaptive adversary strategy, the number of active slots in the first $t$ slots is at most $n_t\cdot f(t) + d_t\cdot g(t)$, with high probability in $n_t$.\footnote{An event happens with high probability (w.h.p.) in some parameter $\lambda$ if it happens with probability at least $1-1/\lambda^c$, for some constant $c\geq 1$.}
\end{definition}

Notice that the above definitions of throughput (both the classical one and $(f,g)$-throughput) do not claim a bound on the number successful transmissions within a time interval. Nonetheless, as Bender et al.~\cite{bender20} have pointed out, they could imply such results. To see this, consider an algorithm that achieves $(f,g)$-throughput and an interval of length $t$. If the adversary does not jam too many slots or inject too many nodes in the sense that $n_t\cdot f(t)+d_t\cdot g(t)<\lambda t$, then one of the most recent $\lambda t$ slots is inactive, implying any node arriving before $(1-\lambda)t$ has succeeded. We capture this relationship between $(f,g)$-throughput and the number of successes more precisely in Corollary~\ref{cor:throughput-and-successes} in Section \ref{sec-alg-analysis}.

As mentioned earlier, our main result concerns with the trade-off between the best possible throughput and the severity of jamming. Interestingly, this comes down to a trade-off between $f$ and $g$. To see this intuitively, notice that successfully sending $n_t$ messages in $t$ slots requires $n_t\cdot f(t) + d_t\cdot g(t)< t$, which implies $(n_t/t)\cdot f(t) + (d_t/t)\cdot g(t)\leq 1$. This suggests, if an algorithm wants to send $n_t$ messages within $t$ slots, then $n_t/t$---which roughly corresponds to throughput---would be at most $1/f$, and $d_t/t$---which corresponds to severity of jamming---would be at most $1/g$. This hints the throughput versus jamming-resistance trade-off is in fact a trade-off between $f$ and $g$.

The following two theorems state the exact relationship between functions $f$ and $g$:

\begin{theorem}[Algorithmic Result]\label{thm:upper-bound}
\highlight{For any function $g$ such that $\log^2(g)$ is sub-logarithmic, there exist a function $f$ where $f(x)\in\Theta\left(\frac{\log x}{\log^2g(x)}\right)$ and an algorithm achieving $(f,g)$-throughput.}
\end{theorem}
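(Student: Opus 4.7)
The plan is to design an open-loop randomized algorithm in which each node, immediately upon arrival, commits to an oblivious broadcast schedule, and then to charge each active slot either to a node (contributing at most $f(t)$ active slots per node) or to a jammed slot (contributing at most $g(t)$ active slots per jam). I would organize each node's lifetime into phases $k=1,2,\ldots$ of geometrically growing length $W_k$, and within phase $k$ the node broadcasts in each slot independently with probability $p_k$ tuned so that the expected total number of broadcasts up to the end of phase $k$ is $\Theta(\log W_k / \log^2 g(W_k))$. A central design decision is to arrange $W_k$ and $p_k$ so that a node's broadcast density covers a geometric range of plausible contention levels while still summing to at most $f(t)$ broadcasts overall.

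The first technical step is a per-phase success analysis. Fixing a phase $k$ and a node $v$, I would condition on $N$, the number of nodes alive throughout the window of phase $k$, and on $D$, the number of slots jammed in that window. Using that node schedules are independent, a Chernoff-type argument would show that, provided $N$ is roughly matched to $1/p_k$ and $D$ is at most a constant fraction of $W_k$, the probability that $v$ has at least one broadcast landing in an unjammed slot with no other broadcaster is at least a positive constant. Summing over phases via a geometric series, every \emph{typical} node succeeds within $\Theta(f(t))$ slots, while nodes for which $N$ or $D$ are atypically large get charged either to extra future arrivals or to extra jammed slots.

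The second step converts these per-node expectations into a high-probability bound on total active slots through a potential argument. I would define $\Phi_t=\sum_{v} R_v(t)$, the sum over alive nodes $v$ of the expected number of active slots $v$ will still consume under its schedule, and show that $\Phi_t$ grows by at most $f(t)$ on each arrival and at most $g(t)$ on each jammed slot, while a non-jammed active slot decreases $\Phi_t$ by $\Omega(1)$ in expectation. A Freedman-type martingale inequality tuned separately to the two increment scales $f(t)$ and $g(t)$ then yields the claimed bound $n_t f(t)+d_t g(t)$ with probability $1-1/\mathrm{poly}(n_t)$. To handle the adaptivity of the adversary, I would reveal each node's schedule one slot at a time, so that every coin flip is independent of the adversary's past observations.

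The hardest part, I expect, will be the tight calibration that produces a $\log^2 g(t)$ factor in the denominator of $f$ rather than a single $\log g(t)$. This appears to demand a nested backoff: an outer schedule whose window lengths grow geometrically in $t$, and within each outer window a second layer of exponential backoff indexed by $\log g(W_k)$, so that a node sweeps through a doubly-exponential range of candidate broadcast probabilities. Verifying that an adaptive adversary without collision detection cannot systematically exploit the misalignment between these two layers is the step where the techniques of Bender et al. must be genuinely extended rather than just replayed with tuned parameters.
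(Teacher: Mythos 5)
Your proposed algorithm is ruled out by the paper's own Theorem~\ref{thm:non-adaptive-lower-bound}. You specify that ``within phase $k$ the node broadcasts in each slot independently with probability $p_k$,'' and the nested refinement you sketch in the last paragraph keeps this per-slot-independence. That is precisely a non-adaptive sending pattern: the probability of broadcasting in slot $i$ (before the first heard success) is a pre-committed scalar $a_i$, not a function of the node's earlier coin flips. Theorem~\ref{thm:non-adaptive-lower-bound} shows any such algorithm fails to achieve $(f,g)$-throughput for $f(x)\in o\bigl(\frac{\log x}{\log g(x)}\bigr)$, and $\Theta\bigl(\frac{\log x}{\log^2 g(x)}\bigr)\subseteq o\bigl(\frac{\log x}{\log g(x)}\bigr)$ whenever $g$ is non-constant. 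The proof exploits exactly the independence you build in: with independent per-slot coins, $\Pr[\text{no success in }I]=\prod_{i\in I}(1-p_i)$, and jamming forces $\sum_i a_i$ to be too large. The paper's \textsc{backoff} subroutine escapes this by drawing a fixed budget of $h(|\mathcal{I}_k|)$ send slots per doubling stage, so the send indicator in one slot is correlated with the send indicators in others; this correlation is what lets you save a factor of $\log g$ in the denominator. Your closing paragraph correctly senses that ``calibration'' is where the $\log^2 g$ comes from, but it cannot be obtained by tuning probabilities alone; you must change the dependency structure of the schedule within a node.

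Separately, even with the backoff schedule corrected, your proposal has no mechanism by which nodes react to channel feedback, and the paper's algorithm leans on that heavily. The algorithm splits slots by parity into two virtual channels and runs a three-phase state machine: new nodes run $(\frac{1}{a}f)$-\textsc{backoff} to force a first success (a synchronization point), a second \textsc{backoff} to lock onto the control channel, and then a restarting truncated exponential backoff (\textsc{batch}) on the data channel, re-synchronized on each control-channel success. The reason is structural: the $\Theta(n)$-successes-in-$\Theta(n)$-slots guarantee under constant-fraction jamming only holds when $n$ nodes start \textsc{batch} simultaneously, and a purely oblivious per-node schedule never assembles such a synchronized cohort. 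Your potential-function/martingale accounting also differs from the paper's decomposition into complete intervals with truncated lengths (Lemma~\ref{lem: bounding length of a complete interval} plus the amortization in the proof of Theorem~\ref{thm:upper-bound}), but that is a secondary divergence; the essential gap is that the algorithm itself discards exactly the feedback that the impossibility results show is indispensable.
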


\begin{theorem}[Impossibility Result]\label{thm:lower-bound}
\highlight{For any functions $f$ and $g$ such that $f$ and $\log^2(g)$ are both sub-logarithmic, if $f(x)\in o\left(\frac{\log x}{\log^2g(x)}\right)$, then there does not exist any algorithm achieving $(f,g)$-throughput.}
\end{theorem}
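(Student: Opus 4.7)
The plan is to prove impossibility by exhibiting, for any candidate algorithm $\mathcal{A}$ and any $f \in o(\log x / \log^2 g(x))$, an adaptive adversary that drives the number of active slots in the first $t$ slots above $n_t \cdot f(t) + d_t \cdot g(t)$ with high probability in $n_t$. The guiding principle is \emph{indistinguishability}: in the no-collision-detection model a slot's feedback is binary (``message heard'' versus ``nothing heard''), so the adversary can engineer two executions---one with a single participant, one with a large batch plus a carefully chosen jamming pattern---that produce statistically close transcripts, forcing $\mathcal{A}$ to behave in the same (wasteful) way in both.

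The first step is to turn the assumption $f(t) \in o(\log t / \log^2 g(t))$ into a structural statement about $\mathcal{A}$. For a lone participant, $\mathcal{A}$ induces a probability profile $p_1, p_2, \ldots$ where $p_s$ depends only on the transcript seen so far (a string of silent symbols until the eventual success). A throughput bound of order $1/f(t)$ forces the lone node to succeed in expected $O(f(t))$ slots, so its profile cannot stay ``too low'' for too long. Using a sum/pigeonhole argument I would extract a threshold $p^\star$ and a contiguous window $W$ of length $L = \Theta(\log t / \log^2 g(t))$ on which every $p_s \le p^\star$ and $\sum_{s \in W} p_s$ is small; such a window must exist precisely because $f(t)$ is sub-logarithmic by the assumed factor.

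Next, I would instantiate the attack on $W$. Inject a batch of $k = \Theta(1/p^\star)$ nodes at the start of $W$, and reserve a jamming budget $d = \Theta(k)$ to jam exactly those slots where the batch would otherwise produce an isolated transmitter. With $k$ chosen this way, in almost every slot inside $W$ either zero nodes broadcast or at least two nodes broadcast---both look like silence without CD---while the remaining ``lucky'' single-broadcaster slots are few enough to be absorbed by the jamming budget. Thus the induced transcript is indistinguishable from the all-silent single-node transcript with probability bounded away from $0$, so the $k$ participants keep using the quiet profile and $W$ contributes $L$ active slots but only $o(L)$ successes. Repeating the attack over $\Theta(t/L)$ disjoint windows and comparing $n_t f(t) + d_t g(t)$ to the accumulated active-slot count yields the desired contradiction, while a Chernoff / union-bound argument over the independent phases boosts the failure probability to the required w.h.p.\ level in $n_t$.

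The main obstacle is the calibration that produces the two factors of $\log g(t)$ in the denominator. One factor comes from the batch size required so that ``at least two broadcasters'' dominates ``exactly one broadcaster'' across $L$ slots of sustained low probability, pushing $k$ up by $\log g(t)$. The other comes from the length $L$ that can be covered without the jamming cost $d \cdot g(t)$ swamping the active-slot budget, forcing another $\log g(t)$ factor. Showing that the two-execution coupling remains valid uniformly over $\mathcal{A}$'s internal randomness and under the adaptive-adversary framework---and that the per-phase failure probability concentrates to $1 - 1/n_t^{\Omega(1)}$ across phases---is the delicate technical core of the proof.
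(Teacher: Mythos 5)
Your high-level intuition is sound---the impossibility does rest on the dilemma that a lone node cannot afford to stay quiet too long while a batch cannot afford to be loud---but the specific machinery you propose has gaps that I don't think can be patched without essentially reinventing a different proof.

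First, the ``window extraction'' step is not a well-defined operation. You assert that from the throughput hypothesis one can extract a threshold $p^\star$ and a \emph{contiguous} window $W$ of length $L=\Theta(\log t/\log^2 g(t))$ on which every $p_s\le p^\star$ and $\sum_{s\in W}p_s$ is small. Nothing forces the lone-node profile to have such a contiguous low region, and even if it did, it is unclear how the length $L$ and threshold $p^\star$ would be coupled to yield the two $\log g(t)$ factors. The paper avoids this entirely: it proves a clean lower bound $\mathbb{E}[\text{number of broadcasts in }[1,t]]=\Omega(\log^2 t/\log^2 g(t))$ by jamming the first $t/(4g(t))$ slots and jamming another $t/(4g(t))$ slots \emph{at random} in $(t/(4g(t)),t]$; if the lone node broadcasts only $k$ times there, the probability that all $k$ hits are eaten by the random jamming is $\Omega(1/(8g(t))^k)$, and since a success must occur w.h.p.\ this forces $k=\Omega(\log_{g(t)}t)$. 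Telescoping over the geometric subintervals $(t/(4g(t))^{i+1},\,t/(4g(t))^i]$ for $i=0,\dots,\Theta(\log_{g(t)}t)$ gives the $\log^2 t/\log^2 g(t)$ product. Your sketch has no analogue of this telescoping; the ``two factors of $\log g(t)$'' remain asserted rather than derived.

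Second, the indistinguishability budget does not balance. With $k=\Theta(1/p^\star)$ nodes each firing with probability about $p^\star$, the per-slot probability of \emph{exactly one} broadcaster is a constant, so over a window of length $L$ you expect $\Theta(L)$ isolated-transmitter slots, not ``few.'' Jamming all of them costs $d=\Theta(L)$, which contributes $d\cdot g(t)=\Theta(Lg(t))$ to the budget and swamps the $L$ active slots you are trying to charge for unless $g$ is constant and the constants are tuned extremely carefully. You could fix this by scaling $k$ so that $kp^\star=\Theta(\log g(t))$, driving the isolated-transmitter probability down to $g(t)^{-\Theta(1)}$, but then you have changed the batch size by a $\log g(t)$ factor and the accounting must be redone from scratch---the very calibration you flag as ``the delicate technical core'' and do not carry out.

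Third, the structure of your argument is genuinely different from the paper's, and I think the paper's decomposition is what makes the proof tractable. The paper splits the dilemma into two \emph{separate} statements: (i) Lemma~\ref{lem:large-contention}, which says that if a node's expected pre-first-success broadcast count in $[1,t]$ is $\omega(h(t)\log t)$, then a batch-plus-random-arrival adversary (using no jamming at all) already breaks $(h,g)$-throughput; and (ii) the random-jamming argument above, which shows the expected broadcast count must be $\Omega(\log^2 t/\log^2 g(t))=\omega(f(t)\log t)$ when $f=o(\log t/\log^2 g(t))$. Composing (i) and (ii) finishes the proof. Your proposal collapses both halves into a single batch-plus-jamming construction, which is an appealing unification, but the moment you try to make the ``indistinguishable transcript'' claim precise you will find you need to control (a) the probability of an accidental success under the batch, and (b) the total jamming used to suppress those successes, simultaneously and uniformly over the algorithm's adaptivity---which is exactly the burden the paper's two-stage argument is designed to avoid.

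In short: right dilemma, but the window extraction is unjustified, the jamming accounting breaks, and the $\log^2 g(t)$ factor is not actually produced. The paper's two-lemma decomposition with the geometric random-jamming telescoping is the piece your proposal is missing.
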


\textit{Remark 1.} Throughout this paper, we say a function $f:\mathbb{R}^+\rightarrow\mathbb{R}^+$ is \emph{sub-logarithmic} if: (1) $f(x)\in O(\log x)$ and is non-decreasing; (2) for some large constant $y_0$, there exists some constant $x_0$ such that $f(x)\geq y_0$ when $x\geq x_0$; (3) for any constant $c>0$, there exists some constant $c_0$ such that for any $n\in\mathbb{N}^+$, $|f(cn)-f(n)|\leq c_0$; (4) for any constant $c>0$, $f(x^c)\in\Theta(f(x))$. We define ``sub-logarithmic'' this way so as to avoid artificial pathological functions.

\textit{Remark 2.} \highlight{In Theorem \ref{thm:upper-bound}, when $\log{g(x)}\in\Theta(\sqrt{\log{x}})$, $f$ becomes a constant function and our algorithm could achieve constant throughput, which is the best possible throughput.}

To illustrate the application of the above results, consider two interesting cases. Suppose $g$ is a constant function, meaning Eve can jam some constant fraction of all $t$ slots, then the best ratio of the number of player arrivals over $t$---which roughly corresponds to throughput---is $\Theta(1/\log{t})$. Moreover, we can devise an algorithm that attains this throughput. (This means in the absence of collision detection, even with constant fraction of jamming, we could send $\Theta(t/\log{t})$ messages in $t$ slots, achieving a decent---though sub-constant---throughput.) On the other hand, if we would like $f$ to be some constant function, meaning sending $\Theta(t)$ messages within $t$ slots, then the maximum number of slots Eve can jam must be bounded by $2^{\Theta(\sqrt{\log{t}})}$. Again, we can devise an algorithm that tolerates such a jamming adversary.

\smallskip\noindent\textbf{Related work.}
One simple and standard algorithm to resolve contention is \emph{binary exponential backoff}. In its classical implementation (e.g., in Ethernet~\cite{metcalfe76}), each participating node waits a random time interval before trying to broadcast its message; if a transmission attempt failed, the node waits another randomly chosen time interval before retrying, and the expected length of the waiting interval doubles after each failure.

Unfortunately, simple exponential backoff cannot provide optimal throughput, for both statistical arrival patterns~\cite{aldous87,hastad87} and batch/adversarial arrival patterns~\cite{bender05}. In view of this, numerous variations of the binary exponential backoff scheme have been proposed and analyzed (such as polynomial backoff and saw-tooth backoff), again for both statistical arrival patterns (e.g., \cite{capetanakis79,raghavan95}) and batch/adversarial arrival patterns (e.g., \cite{bender05,chlebus12}). This paper considers adversarial arrival pattern, and the proposed algorithm utilizes two exponential backoff variants as key subroutines.

Simple backoff algorithms usually do not depend on the availability of collision detection: nodes decrease sending probabilities (i.e., backoff) whenever an empty slot is observed. However, this behavior is not always correct: an empty slot could also mean no node tries to broadcast; in such case, nodes should be more aggressive and increase their sending probabilities (i.e., backon). Therefore, with collision detection, more clever backoff-backon algorithms can be devised (e.g., \cite{awerbuch08,richa10,bender18,chang18}). Such algorithms are especially helpful if external interference is present: often optimal performance can be attained in spite of jamming. However, when collision detection is not available, the exact impact of jamming on solving contention resolution remains unclear. Our paper addresses this open question.

Beside throughput, another important metric when evaluating contention resolution algorithms is the number of channel accesses a node has to make before successfully sending its message (some authors call this the energy complexity). Many existing algorithms (e.g., \cite{bender18,bender20}) have $O(\textrm{poly-log}(n))$ energy complexity, assuming there are $n$ nodes in the system. Nonetheless, somewhat surprisingly, Bender et al.~\cite{bender16} show that $O(\log(\log^* n))$ channel accesses per node is enough for resolving contention.\marginnote{Discuss our algorithm's energy complexity?}

It is worth noting, if we only care about the first success (instead of requiring each node to succeed once), then the problem essentially degrades to leader election---a classical symmetry breaking task. In a seminal work by Willard~\cite{willard86}, a tight bound of $\Theta(\log\log{n})$ slots is proved, assuming $n$ nodes are activated simultaneously. In case nodes are injected dynamically, the problem is known as the ``wake-up problem''~\cite{chlebus05} or the ``synchronization problem''~\cite{dolev09}.

Lastly, we note that contention resolution is an extensively studied problem, and many interesting results are not covered here. (E.g., there are papers focusing on deterministic algorithms~\cite{anantharamu19,marco19}, and papers considering performance metric other than throughput~\cite{agrawal20}.) Interested readers are encouraged to find dedicated survey papers for more details.

\smallskip\noindent\textbf{Paper outline.} In Section \ref{sec-alg-description}, we will first give an overview of the algorithm that achieves optimal throughput for any given level of jamming, including some key design ideas; and then provide a complete description of the algorithm. In Section \ref{sec-alg-analysis}, we will analyze the proposed algorithm and prove our algorithmic result---Theorem \ref{thm:upper-bound}. We will also prove a corollary connecting $(f,g)$-throughput and number of successful transmissions. Finally, in Section \ref{sec-lower-bound}, we will prove two impossibility results. The first one is Theorem \ref{thm:lower-bound}, while the second one demonstrates a certain type of exponential backoff cannot achieve optimal throughput, justifying some decisions we made during the algorithm design process.

\section{The Algorithm}\label{sec-alg-description}

Our algorithm has same high-level framework as \cite{bender20}, but with key adjustments made specifically for achieving the best possible throughput against jamming.

\smallskip\noindent\textbf{Algorithm framework.} It is known that binary exponential backoff cannot provide constant throughput, even if $n$ nodes are activated simultaneously. Nevertheless, in such ``batch'' scenario, the first $\Theta(n)$ slots of the process could achieve a constant throughput. To see this, consider the following implementation of binary exponential backoff: each node broadcasts with probability $1/i$ in slot $i$. For this algorithm, around slot index $n$, if $\Theta(n)$ nodes have already succeeded then we are done. Otherwise, at least $\Theta(n)$ nodes remain, and the sum of their broadcasting probability---often called the \emph{contention} of a slot---is $\Theta(1)$. As the contention of a slot corresponds to the expected number of broadcasting nodes in that slot, a constant contention means a successful transmission will occur with constant probability. Therefore, starting from slot $n$, after another $\Theta(n)$ slots, there is a good chance that at least $\Theta(n)$ successes will occur.

However, such high throughput cannot be maintained in later portion of binary exponential backoff. Hence, we need a mechanism to stop the process once $\Theta(n)$ slots are executed, and then restart. If nodes can access \emph{two} independent channels, then the following method would work. On one channel, called the ``data channel'', nodes execute the above standard exponential backoff algorithm. On the other channel, called the ``control channel'', nodes execute a modified backoff algorithm. The goal of the modified backoff algorithm is to let the first success of the control channel occur in slot $\Theta(n)$, so that backoff on the data channel can stop at the right time. As it turns out, the modified backoff algorithm is pretty simple: each node broadcasts with probability $(\log{i})/i$ in slot $i$.

In the vanilla contention resolution problem, nodes are injected dynamically over time, thus another mechanism is required to ``synchronize'' nodes, so that they can start a backoff process on the data channel in a batch manner. Again, with two independent channels, a simple solution exists. Specifically, a newly arrived node first runs exponential backoff on the control channel, until a success occurs on the control channel. (A new node cannot simply listen and wait for a success, as it might be the only node in the system.) At that point, all nodes in the system are synchronized and can start an efficient new batch on the data channel.

At this point, the only remaining issue is that the model only provides one channel. If nodes have access to a global clock, then an easy solution would be: (1) groups odd slots together and call it ``odd channel'', then assign odd channel to be control channel; and (2) groups even slots together and call it ``even channel'', then assign even channel to be data channel. Unfortunately, such global clock is also not available. Therefore, we need yet another mechanism to allow nodes to make consensus on the role of slots. We defer the discussion of this mechanism to algorithm description.

\smallskip\noindent\textbf{Achieving jamming resistance.} In the above algorithm framework, two types of backoff algorithms are used in two different settings, with different purposes: (1) truncated exponential backoff in batch setting, with the goal of achieving $\Theta(n)$ successes in $\Theta(n)$ slots, assuming $n$ nodes start simultaneously; (2) standard exponential backoff in dynamic setting, with the goal of achieving a single success efficiently. It turns out that the truncated exponential backoff process is extremely robust against jamming. In particular, among the first $\Theta(n)$ slots, even if a constant fraction is jammed, the procedure could still guarantee $\Theta(n)$ successes, and halt at the correct time. On the other hand, however, standard backoff performs poorly against jamming in the dynamic arrival setting. Specifically, in the extreme case in which a single node executes standard exponential backoff, if the adversary jams early slots, then the node's sending probability quickly decays to sub-optimal values, resulting it taking too much time to succeed.

A natural fix to the above issue is to decrease nodes' sending probabilities slower. But to what extent? After all, if nodes' sending probabilities remain high for too long, then in the other extreme case in which Eve injects a lot of nodes within a short period of time, contention among nodes themselves would prevent quick first success. This dilemma is exactly what we exploit in proving the impossibility results (see Section \ref{sec-lower-bound}), and it also hints the optimal sending probabilities nodes should use when running backoff style algorithms in dynamic arrival setting.

\subsection{Algorithm Description}\label{subsec-alg-description}

We first introduce two (parameterized) subroutines: the \emph{\textsc{backoff}} subroutine and the \emph{\textsc{batch}} subroutine. As the name suggests, the \textsc{backoff} subroutine aims to achieve quick first success in dynamic arrival setting, whereas the \textsc{batch} subroutine aims to achieve good throughput in batch setting. Both subroutines are variants of the standard exponential backoff algorithm.
(Careful readers might wonder why we use two different variants, the reason being: the \textsc{batch} subroutine simplifies algorithm presentation and analysis; while the \textsc{backoff} subroutine is necessary for achieving optimal throughput, see Theorem \ref{thm:non-adaptive-lower-bound} in Section \ref{sec-lower-bound} for details.)
Below are the definitions of the two subroutines.

\vspace{-3ex}\begin{center}\begin{tcolorbox}
[
width=0.9\textwidth,
breakable, 
standard jigsaw, opacityback=0, 
opacityframe=0, 
]
\begin{itemize}
	\item[\underline{$h$-\textsc{backoff}}:] Let $h:\mathbb{N}^+\rightarrow\mathbb{N}^+$ be a function. We say a node runs $h$-\textsc{backoff} starting from slot $l$, if for any $k\in\mathbb{N}$, in slot interval $\mathcal{I}_k=[l-1+2^k,l-1+2^{k+1})$, the node sends its message in slots $\{l_i\mid1\le i\le h(|\mathcal{I}_k|)\}$ where each $l_i$ is drawn uniformly at random (with replacement) from $\mathcal{I}_k$. We call $\mathcal{I}_k$ as the $k$-th stage of $h$-\textsc{backoff}.
	\item[\underline{$h$-\textsc{batch}}:] Let $h:\mathbb{N}^+\rightarrow\mathbb{R}^+$ be a function. We say a node runs $h$-\textsc{batch} starting from slot $l$, if for any $k\in\mathbb{N}^+$, the node sends its message with probability $\min\{1,h(k)\}$ in slot $l-1+k$.
\end{itemize}
\end{tcolorbox}\end{center}\vspace{-3ex}

Our algorithm requires a function $g$ as an input parameter, where $\log{g(x)}=O(\sqrt{\log{x}})$. This function signifies the level of jamming the algorithm cam tolerate.
(Recall Definition \ref{def:f-g-throughput} and discussions below it.)
For instance, if Eve can jam some constant fraction of all slots, then $g(x)$ should be a constant function; whereas if Eve can jam $1/\log{t}$ fraction of all slots over a time period of $t$, then $g(x)=\log{x}$.
\marginnote{Is known $g(x)$ necessary for achieving optimal throughput?}
Given $g(x)$, define function $f(x)=\frac{ac_2\log x}{\log^2 (g(x)/a)}$; further define function $h^{ctrl}(x)=\frac{c_3\log{x}}{x}$ and function $h^{data}(x)=\frac{1}{x}$. Here, $a,c_2,c_3$ are constant to be determined in later analysis.

Conceptually, our algorithm uses two channels: the \emph{odd channel} which contains odd slots, and the \emph{even channel} which contains even slots. We use $\alpha$ to represent one of the two channels, and use $\bar{\alpha}$ to represent the other channel.

We are now ready to state the algorithm for a newly injected node $u$. It contains three phases:

\vspace{-3ex}\begin{center}\begin{tcolorbox}
[
width=0.9\textwidth,
breakable, 
standard jigsaw, opacityback=0, 
opacityframe=0, 
]
\begin{itemize}
	\item[\textbf{Phase 1:}] Suppose $u$ is injected at the beginning of slot $l_0$. Node $u$ will run $(\frac{1}{a}f)$-\textsc{backoff} on the channel determined by the parity of $l_0$,\footnotemark\ until hearing a success on either one of the two channels. Node $u$ will then start Phase 2.
	\item[\textbf{Phase 2:}] Suppose the first success $u$ observed during Phase 1 occurred in slot $l_1$, and $l_1$ is in channel $\alpha$. Node $u$ will run $(\frac{1}{a}f)$-\textsc{backoff} on channel $\bar{\alpha}$ starting from slot $l_1+1$, until hearing a success on channel $\bar{\alpha}$ in some slot $l_2$. Node $u$ will then start Phase 3, with $l_3$ set to $l_2$.
	\item[\textbf{Phase 3:}] Starting from slot $l_3+1$ node $u$ will run $h^{ctrl}$-\textsc{batch} on the channel determined by the parity of $l_3+1$, and starting from slot $l_3+2$ node $u$ will run $h^{data}$-\textsc{batch} on the channel determined by the parity of $l_3+2$. Let $\alpha$ be the channel on which $u$ runs $h^{ctrl}$-\textsc{batch}. One execution of Phase 3 ends when $u$ hears a successful transmission on channel $\alpha$ in some slot $l'_3$. By then, $u$ sets $l_3$ to $l'_3$, and restarts Phase 3.
\end{itemize}
\end{tcolorbox}\end{center}\vspace{-3ex}
\footnotetext{Node $u$ does not need to know whether $l_0$ is in the odd channel or the even channel.}

Recall the algorithm framework introduced at the beginning of this section. Phase 1 allows a newly joined node $u$ and the existing nodes to reach agreement on the role (data and control) of the two channels (odd channel and even channel). Specifically, the newly joined node $u$ treats the channel on which the success occurred as the data channel. (Notice, since it might be the case that there are only Phase 1 nodes in the system, nodes in Phase 1 cannot just passively wait for successes. Instead, they run \textsc{backoff} to create the first success efficiently.) Once all nodes have reached agreement on the role of the two channels, in Phase 2, node $u$ runs $(\frac{1}{a}f)$-\textsc{backoff} on the control channel and waits for a success to occur on the control channel. Once such a success occurs, node $u$ and the other nodes executing Phase 2 or Phase 3 are synchronized, and can (re)start Phase 3, which contains an execution of the ``truncated exponential backoff''. Lastly, one important detail worth noting is, whenever a node (re)starts Phase 3, it swaps its data channel and control channel.

Finally, we note that a node halts once its message has been successfully transmitted, as specified by the model.

\section{Algorithm Analysis}\label{sec-alg-analysis}

\subsection{Analysis of \textsc{backoff}}

In this subsection, we will prove two key lemmas demonstrating the effectiveness and robustness of the \textsc{backoff} subroutine. They are used extensively in later analysis.

We begin by stating a concentration inequality that will be used in proving these two lemmas.

\begin{theorem}
[McDiarmid's Ineqality~\cite{dubhashi12}]
\label{thm:mcdiarmid}
Suppose $f(x_1,x_2,\cdots,x_n)$ is a function satisfying: for any $i\in[n]$, it holds $|f(\vec{x})-f(\vec{x}')|<c$, where $\vec{x}=(x_1,x_2,\cdots,x_i,\cdots,x_n)$ and $\vec{x}'=(x_1,x_2,\cdots,x'_i,\cdots,x_n)$ only differ in the $i$-th coordinate. Suppose $\{X_i\}_{i\in[n]}$ are $n$ independent random variables. Then for any $\delta>0$:
\[\Pr\left[\left|f({X}_1,{X}_2,...,{X}_n)-\mathbb{E}\left[f({X}_1,{X}_2,...,{X}_n)\right]\right|\ge\delta\right]\le 2\exp{\left(-\frac{2\delta^2}{c^2n}\right)}\]
\end{theorem}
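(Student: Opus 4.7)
The plan is to prove McDiarmid's inequality in the standard way: set up the Doob martingale associated with $f(X_1,\ldots,X_n)$ and then invoke the Azuma--Hoeffding inequality on it. Concretely, let $Z_0 := \mathbb{E}[f(X_1,\ldots,X_n)]$ and, for $1\le i\le n$, $Z_i := \mathbb{E}[f(X_1,\ldots,X_n)\mid X_1,\ldots,X_i]$. By the tower property $\{Z_i\}_{i=0}^n$ is a martingale with respect to the filtration generated by the $X_j$'s, and $Z_n = f(X_1,\ldots,X_n)$ almost surely. The quantity the theorem wants to bound is therefore $|Z_n - Z_0|$, and by the martingale telescoping $Z_n - Z_0 = \sum_{i=1}^n (Z_i - Z_{i-1})$, so everything reduces to controlling the martingale differences.

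The technical core is to show $|Z_i - Z_{i-1}|\le c$ almost surely for each $i$. Using that the $X_j$'s are independent, one rewrites $Z_{i-1}$ by integrating out $X_i$ against an independent copy $X_i'$ with the same distribution: $Z_{i-1} = \mathbb{E}_{X_i'}\bigl[\mathbb{E}[f(X_1,\ldots,X_{i-1},X_i',X_{i+1},\ldots,X_n)\mid X_1,\ldots,X_{i-1}]\bigr]$, so that $Z_i - Z_{i-1}$ equals the conditional expectation, given $X_1,\ldots,X_i$, of $f(X_1,\ldots,X_i,X_{i+1},\ldots,X_n) - f(X_1,\ldots,X_i',X_{i+1},\ldots,X_n)$. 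The two arguments of $f$ here differ only in the $i$-th coordinate, so the bounded-differences hypothesis bounds this expression pointwise by $c$, and hence in expectation as well. This coupling---swapping $X_i$ for an independent copy, and justifying the exchange of conditioning via independence---is the step I expect to be the main obstacle, although it is entirely standard.

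With $|Z_i - Z_{i-1}|\le c$ established, I would apply Azuma--Hoeffding to the martingale $\{Z_i\}$. Using the sharper form of Hoeffding's lemma valid when each martingale increment lies in an interval of length at most $c$ (which follows here because, conditioned on $X_1,\ldots,X_{i-1}$, $Z_i$ is a deterministic function of $X_i$ whose range has diameter at most $c$ by the bounded-differences condition), one obtains $\Pr[|Z_n - Z_0|\ge \delta]\le 2\exp(-2\delta^2/(c^2 n))$. Substituting $Z_0 = \mathbb{E}[f(X_1,\ldots,X_n)]$ and $Z_n = f(X_1,\ldots,X_n)$ yields exactly the stated inequality.
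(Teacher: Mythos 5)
Your proposal is correct and is the standard Doob-martingale plus Azuma--Hoeffding proof of McDiarmid's inequality; since the paper simply cites this theorem from a reference without reproducing a proof, there is nothing paper-specific to compare against. You correctly flag the one subtlety needed to recover the sharp constant $2\exp(-2\delta^2/(c^2n))$ rather than the weaker $2\exp(-\delta^2/(2c^2n))$: conditioned on $X_1,\ldots,X_{i-1}$, the increment $Z_i$ ranges over an interval of length at most $c$, so the Hoeffding lemma applies with variance proxy $c^2/4$ per step.
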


The first key technical lemma concerns within the scenario where \textsc{batch} and \textsc{backoff} are being executed concurrently. That is, in the system, a set of (synchronized) nodes are running \textsc{batch}, while some other (un-synchronized) nodes are running \textsc{backoff}. In such case, so long as the adversary does not inject too many new nodes or jam too many slots, at least one success will occur sufficiently fast. One point worth noting is, conditioned on the state at the beginning of the considered slot interval, prior to the first success, the power of an adaptive adversary and an oblivious adversary are identical (as all channel feedback is silence, an adaptive adversary has nothing to adapt to). Since our applications of the two lemmas only concern with first success, here we only consider an oblivious adversary.

\begin{lemma}\label{lem: batch with additional nodes}
Let $t$ and $c'$ be sufficiently large integers, $f$ be any sub-logarithmic function satisfying $f(x)=O(\log{x})$. Consider slot interval $\mathcal{I}=[L,R]$ where $|\mathcal{I}|=t$, assume the following conditions hold:
\begin{enumerate}
	\item For each slot $i\in[L,R]$, there are $q_i$ (synchronized) nodes running some instance of \textsc{batch}, and each node's sending probability is $p_i$. Moreover, it holds that $\frac{\log t}{c'}\ge q_i\cdot p_i\ge\frac{c'\log t}{t}$ and $p_i\le\frac{1}{2}$.
	\item During slot interval $[1,R]$, aside from the nodes described in (1), an oblivious adversary injects at most $\frac{t}{100f(t)}$ additional new nodes and jams at most $\frac{t}{100}$ slots.
	\item Each node injected by the adversary runs $f$-\textsc{backoff} after joining the system.
\end{enumerate}
Then there exists at least one success slot in $\mathcal{I}$, with high probability in $t^2$.
\end{lemma}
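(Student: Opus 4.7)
The plan is to lower bound the expected number of successful slots in $\mathcal{I}$ by $\Omega(\log t)$, and then conclude via conditional independence of batch coins across slots that at least one success occurs except with probability $t^{-\Omega(1)}$.

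First I would set up a per-slot probability bound. Call a slot \emph{good} if no adversary-injected backoff node is in one of its first few stages (stages $0,\ldots,k_0$ for a small constant $k_0$). A counting argument shows the number of bad slots is at most $O(N) \le t/100$. On a good, non-jammed slot, every individual sending probability---$p_i\le 1/2$ from batch, and $r_{j,i}=h(2^k)/2^k\le 1/2$ from backoff (this is where stage $\ge k_0+1$ is used)---is at most $1/2$, so the standard estimate for sums of independent Bernoullis gives
\[
\Pr\!\left[\text{success in slot } i\right] \;\ge\; \lambda_i\, e^{-2\lambda_i}, \qquad \lambda_i \;=\; q_i p_i + \sum_j r_{j,i}.
\]
A key point is that the prefactor is $\lambda_i$ rather than merely $\mu_i$, because both a single batch transmission with backoff silent and a single backoff transmission with batch silent count as successes.

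Second, I would show $G := \sum_{i \in \mathcal{I}_{\text{good}}\setminus J} \lambda_i e^{-2\lambda_i} \ge C\log t$ for a constant $C$ that can be made large by choosing $c'$ large. The hypothesis $\mu_i \ge c'\log t/t$ yields $\sum_i \mu_i \ge c' \log t$, while Eve's backoff budget is bounded by $\sum_i \nu_i \le N\cdot O(f(t)\log t) \le O(t\log t/100)$. A case split on whether $\lambda_i \le 1$ or $\lambda_i > 1$---using $\lambda e^{-2\lambda} \ge \lambda/8$ in the former regime and a budget accounting in the latter---gives the required lower bound on $G$, even after subtracting the at most $t/100$ jammed slots.

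Third, for concentration I would condition on the backoff slot-choices $\beta$. Since the batch coins in different slots are mutually independent and independent of $\beta$,
\[
\Pr[\text{no success in } \mathcal{I} \mid \beta] \;\le\; \prod_i \!\bigl(1 - P_i(\beta)\bigr) \;\le\; \exp\!\Bigl(-\!\sum_i P_i(\beta)\Bigr).
\]
Applying Theorem~\ref{thm:mcdiarmid} to a suitable bounded-difference statistic of $\beta$ (for instance, the count of good non-jammed slots with $\beta_i \le 1$, which is a $2$-Lipschitz function of the $O(N f(t)\log t)$ backoff coin flips) shows this statistic concentrates around its expectation with probability at least $1 - t^{-\Omega(c')}$, whence $\sum_i P_i(\beta) \ge G/2$ with the same probability. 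Combining the three pieces, $\Pr[\text{no success in }\mathcal{I}] \le t^{-\Omega(c')}$, which is w.h.p.\ in $t^2$ for $c'$ sufficiently large.

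The hard part is the second step: the adversary, who knows the deterministic schedule $\{\mu_i\}$, can choose her $N$ injection times to concentrate backoff contention on precisely the slots that most hurt $G$. Handling this requires a careful constrained-optimization argument, showing that neither concentrating her budget on a few slots (which pushes $\lambda_i$ into the regime where $\lambda_i e^{-2\lambda_i}$ is tiny but costs her a proportional fraction of the budget) nor spreading it evenly (which keeps $\lambda_i$ small and thus $\lambda_i e^{-2\lambda_i} \approx \lambda_i$) can drive $G$ below $\Omega(c' \log t)$.
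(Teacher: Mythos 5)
Your plan is a genuine departure from the paper's proof. The paper makes a structural case split: it sets $\textit{mid}=(L+R)/2$ and distinguishes the case where the expected number of backoff sends in $[L,\textit{mid})$ is small (at most $t^{0.7}$) from the case where it is large. In the small case, McDiarmid shows that the number of slots containing any backoff send or jamming is at most $t/4$; the remaining slots succeed with probability $\Omega(c'\log t / t)$ from batch contention alone, and a Chernoff bound finishes. In the large case, the paper invokes the exponential stage structure of $f$-\textsc{backoff} to argue that contention does not decay: every slot in $[\textit{mid},R]$ inherits at least $t^{-0.4}$ contention from the backoff nodes, and a total-budget argument shows that at least $t/8$ slots have contention in the window $[t^{-0.4},(\log t)/8]$. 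This produces $\Omega(t^{0.6})$ expected successes, large enough for McDiarmid's $\Theta(\sqrt{t\log t})$ fluctuations to be negligible. Your approach of estimating $G=\sum_i\lambda_ie^{-2\lambda_i}$ and then concentrating is different and in principle more unified, but as sketched it has two genuine gaps.

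First, the case split on $\lambda_i\le 1$ vs.\ $\lambda_i>1$ is not the right one for a budget argument. The batch contention $\mu_i=q_ip_i$ is only bounded above by $(\log t)/c'$, which exceeds $1$ for all large $t$; in that regime every slot lies in the $\lambda_i>1$ bucket without the adversary spending any budget, so ``budget accounting in the latter regime'' does not apply. A correct version of this step has to distinguish between contention forced by the batch nodes (for which $e^{-2\mu_i}\ge t^{-2/c'}$ is the controlling quantity) and contention added by the adversary's $\nu_i$, and the lower bound you extract on $G$ depends delicately on which is dominant. You acknowledge that the constrained optimization is the hard part, but the proposal gives neither the right split nor the accounting that would close it.

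Second, and more seriously, the concentration step does not go through as stated. A bound $G\ge C\log t$ is far too weak for the McDiarmid conclusion you want: the statistic $\sum_iP_i(\beta)$ is a $2$-Lipschitz function of $\Theta(t\log t)$ backoff coins, so its McDiarmid fluctuations are of order $\sqrt{t\log t}$, which swamps $\Theta(\log t)$. The claimed failure probability $t^{-\Omega(c')}$ has no basis in Theorem~\ref{thm:mcdiarmid}, whose exponent is $\delta^2/(c^2n)$ and does not see $c'$. The ``McDiarmid on $Z=\#\{i:\beta_i\le 1\}$'' idea is reasonable, but to make it work you would need (i) a lower bound on $\mathbb{E}[Z]$ that is much larger than $\sqrt{t\log t}$ (which requires analysing how the adversary's injection schedule, filtered through the backoff stage structure, shapes the distribution of $\beta_i$ --- precisely the structural fact the paper's Case~2 isolates), and (ii) an explicit translation from a lower bound on $Z$ to a lower bound on $\sum_iP_i(\beta)$, which involves $\min_i q_ip_i(1-p_i)^{q_i-1}$ and $\min_i(1-p_i)^{q_i}$; neither step is carried out. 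In short, what the paper gains from its case split is that the expected success count is either $\Omega(c'\log t)$ in a regime with essentially no backoff-induced randomness (so Chernoff over batch coins alone suffices), or $\Omega(t^{0.6})$ in a regime where McDiarmid's $\sqrt{t\log t}$ error is harmless; your single-pass estimate lands in the awkward middle where neither concentration tool directly applies.
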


\begin{proof}
Let $mid=(L+R)/2$. Call the nodes injected by the oblivious adversary as \emph{additional} nodes. Let $x_i$ be a random variable denoting the number of additional nodes that send in slot $i$. We consider two complement cases, depending on the value of $\sum_{L\leq i< mid}\mathbb{E}[x_i]$.

\textit{Case 1.} Suppose $\sum_{L\leq i< mid}\mathbb{E}[x_i]\leq t^{0.7}$. Call a slot \emph{occupied} if at least one additional node sends in this slot, or if this slot is jammed.
Note that $X=\sum_{L\le i<mid}{x_i}$ depends on at most $t\log t$ independent random variables: there are at most $\frac{t}{100f(t)}$ additional nodes, each will broadcast in at most $f(t)\log{t}$ slots during $\mathcal{I}$; the indices of the slots these nodes broadcast are the random variables that determine $X$.
It is easy to see, changing the value of each such random variable affects the value of $X$ by at most one. Since $\sum_{L\leq i< mid}\mathbb{E}[x_i]\leq t^{0.7}$, according to Theorem~\ref{thm:mcdiarmid}, with high probability in $t^2$ we have $\sum_{L\le i<mid}{x_i}\le t^{0.8}$. Thus, for sufficiently large $t$, during $[L,mid)$, the number of occupied slots is at most $t/100+t^{0.8}\le t/4$. In each non-occupied slot, since there are $q_i$ nodes sending messages each with probability $p_i$ and $\frac{\log t}{c'}\ge q\cdot p_i\ge\frac{c'\log t}{t}$, it is easy to verify a success occurs with probability $\Omega(\min\{\frac{c'\log t}{t},t^{-2/c'}\})=\Omega(\frac{c'\log t}{t})$ when $c'\geq 2$. Since there are at least $t/4$ non-occupied slots during $[L,mid)$, by a Chernoff bound, we know for sufficiently large $c'$, a success will occur in some non-occupied slot, with high probability in $t^2$.

\textit{Case 2.} Suppose $\sum_{L\le i< mid}\mathbb{E}[x_i]> t^{0.7}$. We now prove $\mathbb{E}[x_i]>t^{-0.4}$ for any $i\ge mid$. For each additional node $u$, denote $a^u_i$ as the probability that $u$ sends in slot $i$. Recall that $f$-backoff will send at most $f(2^j)$ times in the $j$-th stage and the $j$-th stage has length $2^j$. If $u$ arrives before slot $L-t$, the stages that intersect with $[L,R]$ all have length at least $t$. Thus, there are at most two stages intersecting $[L,R]$, and $a^u_i$ in each such stage differ by some constant factor. This implies $a^u_k$ is at least $\frac{1}{t^{1.1}}\sum_{L\le i<mid}a^u_i$ for $k\in[mid,R]$. If, however, $u$ joins after slot $L-t$, we know when $i\in[mid,R]$ it must be the case $a^u_i=\Omega(1/t)$, since the stages intersecting $[mid,R]$ has length $O(t)$. Moreover, $\sum_{L\le i\le mid}a^u_i$ is at most $O(\log^2t)$ since there are at most $\log t$ stages in $[L,mid]$ and each such stage contributes at most $f(t)=O(\log t)$ times of sending. Thus, $a^u_k\ge\frac{1}{t^{1.1}}\sum_{L\le i\le mid}a^u_i$ for $k\in[mid,R]$. As this point, we conclude: for any additional node $u$ and any $k\in[mid,R]$, $a^u_k\ge\frac{1}{t^{1.1}}\sum_{L\le i\le mid}a^u_i$. Since $\mathbb{E}[x_i]=\sum_u{a^u_i}$, we have $\mathbb{E}[x_i]> t^{-0.4}$ for any $i\in[mid,R]$.

Let $x'_i$ be the \emph{total} (i.e., including nodes running \textsc{batch} as well as \textsc{backoff}) number of nodes that send in slot $i$. Since each additional node contributes at most $f(t)\log t$ sending slots during $[L,R]$, we know $\mathbb{E}[\sum_{mid\le i\le R}{x'_i}]\le \frac{t}{100f(t)}\cdot f(t)\log{t}+\frac{t}{2}\cdot\frac{\log t}{c'}=\frac{t\log{t}}{100}+\frac{t\log{t}}{2c'}$. For sufficiently large $c'$, this means there are at most $\frac{t}{4}$ slots in $[mid,R]$ with $\mathbb{E}[x'_i]\ge \frac{\log t}{8}$, which further implies there are at least $\frac{t}{4}$ slots in $[mid,R]$ with $t^{-0.4}\le\mathbb{E}[x'_i]\le\frac{\log t}{8}$. Call a slot $i$ \emph{good} if: (1) $t^{-0.4}\le\mathbb{E}[x'_i]\le\frac{\log t}{8}$; (2) no nodes are injected in $i$; and (3) the adversary does not jam $i$. We know there are at least $\frac{t}{8}$ good slots in $[mid,R]$. Moreover, it is easy to verify, a good slot has probability at least $t^{-0.4}$ to generate a success. Now, let $Y$ be a random variable denoting the number of successes in slots $[mid,R]$. We know $\mathbb{E}[Y]$ is at least $\frac{t}{8}\cdot t^{-0.4}=\frac{t^{0.6}}{8}$. Note that $Y$ is a function of at most $t\log t$ independent random variables each affecting the value of $Y$ by at most one. Thus, according to Theorem \ref{thm:mcdiarmid}, with high probability in $t^2$ we have at least one success in $[mid,R]$.
\end{proof}

The following second lemma focus on the scenario in which no instance of \textsc{batch} is running, or \textsc{batch} ends in the middle of the considered interval. (Specifically, in the lemma statement, slot $k$ marks the end of the \textsc{batch} instance, and $k=0$ means initially there is no instance of \textsc{batch} running). Once again, so long as the adversary does not inject too many new nodes or jam too many channels, successful transmission would soon occur.

\begin{lemma}\label{lem: batch without additional nodes}
Let $t,c',c'_1$ be sufficiently large integers. Let $g$ be any function such that $\log g(x)$ is sub-logarithmic and $\log g(x)=O(\sqrt{\log x})$. Consider slot interval $[1,t]$. There exists a sufficiently large constant $c'_2$ and a function $f(x)=\frac{c'_2\log x}{\log^2g(x)}$ such that, if the following conditions are satisfied:
\begin{enumerate}
	\item There exists an integer $k\in[0,t]$ such that for each slot $i\in[1,k]$, there are $q_i$ (synchronized) nodes running some instance of \textsc{batch}, and each node's sending probability is $p_i$. Moreover, for each $i\in[k/2,k]$, it holds $\frac{\log k'}{c'}\ge q_i\cdot p_i\ge \frac{c'\log k'}{k'}$ and $p_i\leq\frac{1}{2}$, where $k'=\frac{k}{2}$.
	\item Aside from the nodes described in (1), an oblivious adversary injects at most $\frac{t}{c'_1f(t)}$ additional new nodes and jam at most$\frac{t}{c'_1g(t)}$ slots in $[1,t]$.
	\item The adversary injects at least one additional node in the first $\max\{k,1\}$ slots.
	\item Each node injected by the adversary runs $f$-\textsc{backoff} after joining the system.
\end{enumerate}
Then there exists at least one success in the first $t$ slots, with high probability in $t$.
\end{lemma}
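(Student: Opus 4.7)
The plan is to case-split on the size of $k$: when the \textsc{batch} instance covers a constant fraction of $[1,t]$, invoke Lemma~\ref{lem: batch with additional nodes} directly; otherwise run a dedicated argument on the post-\textsc{batch} tail.

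For the case $k\ge t/\lambda$, with $\lambda$ a sufficiently large constant, I would apply Lemma~\ref{lem: batch with additional nodes} to the sub-interval $[\lceil k/2\rceil,k]$ of length $k/2$. The batch-contention hypothesis holds on $[k/2,k]$ by condition~(1), and sub-logarithmicity of $f$ gives $f(t)=\Theta(f(k/2))$, so for $c'_1$ large enough the global budgets $t/(c'_1 f(t))$ and $t/(c'_1 g(t))$ fit inside the local budgets $(k/2)/(100 f(k/2))$ and $(k/2)/100$ required by Lemma~\ref{lem: batch with additional nodes}. The conclusion is a success in $[k/2,k]$ with probability at least $1-1/(k/2)^{2c}\ge 1-1/t^{c}$.

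For the case $k<t/\lambda$, I would focus on the window $\mathcal{I}'=[\max\{k,1\}+1,t]$ of length at least $(1-1/\lambda)t$, in which no \textsc{batch} instance runs. By hypothesis an additional node $u_1$ is already present at the start of $\mathcal{I}'$, so its $f$-\textsc{backoff} stages of lengths $1,2,4,\ldots,\Theta(t)$ all intersect $\mathcal{I}'$, and summing the per-stage send counts gives
\[\sum_{j=0}^{\lfloor\log_2 t\rfloor} f(2^j)=\sum_j\frac{c'_2\,j}{\log^2g(2^j)}=\Omega\!\left(\frac{\log^2 t}{\log^2 g(t)}\right)=\Omega(f(t)\log t).\]
I would then mimic the two-case split of Lemma~\ref{lem: batch with additional nodes}. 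If the total expected sending of the other additional nodes in the first half of $\mathcal{I}'$ is at most $t^{0.7}$, the adversary's node budget $t/(c'_1 f(t))$ combined with the $O(f(t)\log t)$ per-node send cap keeps the expected collisions on $u_1$'s sends small; an explicit upper bound on $\prod_j(|S_j|/2^j)^{f(2^j)}$, maximized over the jam set subject to $\sum_j|S_j|\le t/(c'_1 g(t))$, then shows that with $c'_2$ and $c'_1$ large the probability every send of $u_1$ is jammed or collides is at most $1/t^c$, so a singleton non-jammed send occurs with high probability in $t$. In the complementary high-contention subcase, I would reuse the ratio-between-adjacent-stages estimate from Case~2 of Lemma~\ref{lem: batch with additional nodes} to show the first-half contention spreads into the second half of $\mathcal{I}'$, producing $\Omega(t)$ slots with expected contention in $[t^{-0.4},\log t/8]$, each yielding a success with probability $\Omega(t^{-0.4})$, after which Chernoff closes.

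The hardest step is the low-contention branch of the short-\textsc{batch} case: without any \textsc{batch} instance to generate ``free'' successes, the argument must extract a success entirely from $u_1$'s sparse $f$-\textsc{backoff} pattern. The calibration $f(x)=c'_2\log x/\log^2 g(x)$ together with the jamming budget $t/(c'_1 g(t))$ is tuned precisely so that the Lagrangian-optimal allocation $s_j\propto f(2^j)$ forces $\sum_j f(2^j)\log(s_j/2^j)\le -c\log t$ for $c'_2$ and $c'_1$ sufficiently large, delivering the required high-probability bound. Propagating this estimate through the presence of other additional nodes, using that their total expected interfering sending is only $O(t\log t/c'_1)$, is where the most care is needed.
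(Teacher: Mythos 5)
The proposal captures the broad strategy—case-split on whether the \textsc{batch} covers a large or small prefix, use Lemma~\ref{lem: batch with additional nodes} in the large case, and run a dedicated backoff argument in the small case—but the threshold $k<t/\lambda$ with $\lambda$ a \emph{constant} is too coarse and creates a genuine gap.

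First, the statement that node $u_1$'s ``$f$-\textsc{backoff} stages of lengths $1,2,4,\ldots,\Theta(t)$ all intersect $\mathcal{I}'$'' is simply false in general. Condition~3 only guarantees $u_1$ arrives in some slot $t_u\leq\max\{k,1\}$; if $t_u=1$ and $k$ is a constant fraction of $t$ (say $k=t/(2\lambda)$), then every stage of length less than $k$ lies entirely in $[1,k]$, outside $\mathcal{I}'$. Moreover, condition~1 only constrains the \textsc{batch} contention on $[k/2,k]$—on $[1,k/2]$ it can be arbitrarily high—so sends in those early stages can be wiped out. Thus the count $\sum_{j=0}^{\lfloor\log_2 t\rfloor}f(2^j)=\Omega(f(t)\log t)$ vastly overstates the number of usable sends: the stages of $u_1$ that lie entirely in $\mathcal{I}'$ number only $\Theta(\log(t/k))$, which is $O(\log\lambda)=O(1)$ when $k\approx t/\lambda$ and $t_u$ is small.

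Second, this constant number of usable stages is fatal in the regime $t/g(t)<k<t/\lambda$ when $\log g(t)=\omega(1)$. To get high probability in $t$ one needs the exponent in the product bound to be $\Omega(\log t)$, i.e.\ $\Omega(f(t)\log^2 g(t))=\Omega(c'_2\log t)$, and that requires $\Theta(\log g(t))$ usable stages with geometrically decaying occupied fraction (exactly as in the paper's Case~3, which is restricted to $k\le t/g(t)$ so that $2^{\ell'}>k$ and the stages $\ell',\ldots,\ell$ are post-\textsc{batch}). With only $O(1)$ usable stages, the best you get is an exponent of $O(f(t)\log g(t))=O(\log t/\log g(t))=o(\log t)$, so the probability is only $t^{-o(1)}$, not $1/\mathrm{poly}(t)$. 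The paper avoids this hole by handling $t/g(t)\leq k\leq t/2$ via Lemma~\ref{lem: batch with additional nodes} applied to $[k/2,k]$—but crucially that application leans on a preliminary reduction the proposal omits (first dispatching the sub-case where more than $t^{0.6}$ additional nodes arrive in $[1,t/2]$, which makes the node budget in $[1,k]$ small enough, $\leq t^{0.6}$, to satisfy Lemma~\ref{lem: batch with additional nodes}'s conditions even when $k$ is as small as $t/g(t)$). Without that reduction, the global budget $t/(c'_1 f(t))$ does not fit into the local budget $(k/2)/(100 f(k/2))$ when $k\approx t/g(t)$, since that would force $c'_1=\Omega(g(t))$.

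To repair the proposal you would need to (a) perform the ``$\le t^{0.6}$ nodes in $[1,t/2]$'' reduction up front (your ``$\leq t^{0.7}$ expected sending'' idea is in the right spirit but is placed in the wrong case and doesn't bound the actual node count needed for Lemma~\ref{lem: batch with additional nodes}), and (b) move the case threshold from $t/\lambda$ to $t/g(t)$, handling the range $[t/g(t),t/2]$ via the \textsc{batch} and restricting the pure-backoff argument to $k\le t/g(t)$, where the number of usable stages is indeed $\Theta(\log g(t))$ regardless of where $u_1$ arrived.
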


\begin{proof}
Let $f(x)=\frac{c'_2\log x}{\log^2g(x)}$ for some sufficiently large $c'_2$ to be specified latter. First consider the situation in which the adversary injects at least $t^{0.6}$ nodes in the first $t/2$ slots. Let random variable $x'_i$ be the number of nodes that send in slot $i$. Since each additional node arrived in the first $t/2$ slots contributes at least $1/t$ to $\mathbb{E}[x'_i]$ when $i>t/2$ and $c'_2$ is sufficiently large, we have the lower bound $\mathbb{E}[x_i]\ge t^{-0.4}$ for any $i>t/2$. On the other hand, since each additional node contributes at most $f(t)\log t$ sending slots in $[1,t]$, and since for existing nodes executing \textsc{batch} (if some \textsc{batch} is running) $q_i\cdot p_i$ is at most $\frac{\log t}{c'}$ when $i>t/2$, we have the upper bound $\mathbb{E}[\sum_{t/2< i\leq t}{x'_i}]\le\frac{t}{c'_1f(t)}\cdot f(t)\log{t} + \frac{t}{2}\cdot \frac{\log t}{c'}$. Then, apply same analysis as in Case 2 of the proof of Lemma~\ref{lem: batch with additional nodes}, there is a success in $[t/2,t]$ with high probability in $t$.

In the reminder of the proof, we consider the situation in which the adversary injects at most $t^{0.6}$ nodes in the first $t/2$ slots. Specifically, we consider three cases depending on the value $k$.

\textit{Case 1:} Suppose $k\ge t/2$. For sufficiently large $t,c',c'_1,c'_2$, all the conditions in Lemma~\ref{lem: batch with additional nodes} are satisfied if we set $L=k/2$ and $R=k$. Thus, apply Lemma~\ref{lem: batch with additional nodes} and we know there is a success in $[L,R]$ with high probability in $(k/2)^2>t$.

\textit{Case 2:} Suppose $t/2\ge k\ge t/g(t)$. Recall that $g(t)$ is at most $2^{O(\sqrt{\log t})}$. Also recall that at most $t^{0.6}$ nodes are injected before slot $t/2$, and in Lemma~\ref{lem: batch with additional nodes} we only require the fraction of jammed slots to be bounded by $1/100$. As a result, for sufficient large $c'_1$, all the conditions in Lemma~\ref{lem: batch with additional nodes} are satisfied if we set $L=k/2$ and $R=k$. Thus, there is at least one success in $[L,R]$ with high probability in $\left(k'\right)^2$. Since $k$ is at least $\frac{t}{g(t)}$ and $g(t)$ is at most $2^{\sqrt{\log n}}$, $\left(k'\right)^2$ is at least $t$ for sufficiently large $t$.

\textit{Case 3.} Suppose $k\le t/g(t)$. Suppose $u$ is a node arriving in slot $t_u$ where $t_u\le k$.
(By lemma assumption, such node $u$ exists.)
We will prove that there is at least one success for $u$ in the first $t/2$ slots. We call a slot \emph{occupied} if additional nodes other than $u$ send in this slot or this slot is jammed by the adversary. Since each additional node can send in at most $f(t)\log t$ slots in $[1,t]$ and there are at most $\frac{t}{c'_1g(t)}\le\frac{t}{g(t)}$ jammed slots, there are at most $t^{0.6}\cdot f(t)\log{t}+\frac{t}{g(t)}\le\frac{2t}{g(t)}$ occupied slots for sufficiently large $t$. (Recall that $\log g(t)=O(\sqrt{\log t})$).

Define slot interval $\mathcal{P}_j=[t_u+2^j-1,t_u+2^{j+1}-1)$. Recall that $u$ will choose $f(|\mathcal{P}_j|)$ slots uniformly at random (with replacement) to send in interval $\mathcal{P}_j$. Suppose there are $a_j$ occupied slots in interval $\mathcal{P}_j$. Define $\ell$ as the largest $j$ such that $\mathcal{P}_j$ is entirely contained within interval $[1,t/2]$. Recall that $\log g$ is sub-logarithmic, which means $g$ is at least $2^{10}$ for sufficiently large $t$, hence $k\le\frac{t}{g(t)}\le\frac{t}{4}$. As a result, $\ell\in\Theta(\log t)$. Define $\ell'$ as the smallest $j$ such that $\frac{2t}{g(t)}\le 2^j$. Clearly $2^{\ell'}>k$ and $\ell\ge\ell'$, meaning only occupied slots can jam the success of $u$ for \textsc{backoff} stages after $\ell'$. Since there are at most $\frac{2t}{g(t)}$ occupied slots, the probability that there is no success slot for node $u$ in $\mathcal{P}_{\ell'},\cdots,\mathcal{P}_{\ell}$ is at most:
$$\prod_{\ell'\le i\le \ell}\left(\frac{a_i}{2^i}\right)^{f(|\mathcal{P}_i|)}
\leq\prod_{\ell'\le i\le\ell}\left(\frac{2t}{2^ig(t)}\right)^{f\left(2^{\ell'}\right)}
\leq\left(\frac{10}{g(t)}\right)^{\frac{1}{2}\cdot f\left(2^{\ell'}\right)\cdot\log\left(g(t)\right)}$$

Since $\ell'$ is the smallest $j$ such that $\frac{2t}{g(t)}\le 2^j$ and $\log g(t)=O(\sqrt{\log t})$, $2^{\ell'}>t^{0.5}$. Since $f(t^d)=\Theta\left(f(t)\right)$ for any $d>0$, $f(2^{\ell'})=\Theta\left(f(t)\right)$. Recall $f(t)=\frac{c'_2\log t}{\log^2 g(t)}$, for sufficiently large $c'_2$, the above probability is at most $1/t$.
\end{proof}

\subsection{Additional Notations}

In this subsection, we present some additional notations so as to simplify the presentation of later analysis.

We begin by introducing three special types of slots.

\marginnote{Definition of transition slot needs double check, can it be beginning/ending slot?}
\vspace{-3ex}\begin{center}\begin{tcolorbox}
[
width=0.85\textwidth,
breakable, 
standard jigsaw, opacityback=0, 
opacityframe=0, 
]
\begin{itemize}
	\item[\underline{\emph{Beginning slot}}:] Consider a slot $t$, if there exists some active node in slot $t$ and all active nodes in the system in slot $t$ were injected at the beginning of slot $t$, then slot $t$ is a {beginning slot}.
	\item[\underline{\emph{Ending slot}}:] Consider a slot $t$, if there is only one active node in the system at the beginning of slot $t$, and it succeeds in slot $t$, then slot $t$ is an {ending slot}.
	\item[\underline{\emph{Transition slot}}:] Consider a slot $t$ containing a success and denote the last beginning slot before $t$ (or $t$ itself if $t$ is a beginning slot) as $b$. The success slot $t$ is called a {transition slot} if either: (1) there is no success within slots $[b,t-1]$; or (2) $t-r_t$ is odd and there is no success within slots $\{s\in[r_t+1,t-1]:(t-s)\textrm{ is a positive even integer}\}$, where $r_t$ is the last transition slot before $t$.
\end{itemize}
\end{tcolorbox}\end{center}\vspace{-3ex}

Intuitively, transition slots are the slots in which active nodes' states change: from Phase 1 to Phase 2, or from Phase 2 to Phase 3, or restart Phase 3.

We then utilize above special slots to define \underline{\emph{complete intervals}}. Specifically, each of the following three kinds of time intervals is a complete interval:

\vspace{-3ex}\begin{center}\begin{tcolorbox}
[
width=\textwidth,
breakable, 
standard jigsaw, opacityback=0, 
opacityframe=0, 
]
\begin{itemize}
	\item Interval $[b,r]$ from a beginning slot $b$ to the first transition slot $r$ after $b$.
	\item Interval $[r+1,r']$ from slot $r+1$ to the first transition slot $r'$ after $r$, where slot $r$ is a transition slot.
	\item Interval $[r+1,e]$ from slot $r+1$ to the first ending slot $e$ after $r$, where: (1) $r$ is a transition slot; (2) $r$ is not an ending slot; and (3) there is no transition slot within $[r+1,e-1]$.
\end{itemize}
\end{tcolorbox}\end{center}\vspace{-3ex}

Intuitively, for any interval, if we mark all the beginning-slots/ending-slots/transition-slots, then these special slots divide the interval into a set of ``segments''. Each such segment that contains active node(s) is a complete interval.

Now, consider an arbitrary interval $\mathcal{I}=[L_{\mathcal{I}},R_{\mathcal{I}}]$. If a node starts Phase 1 or Phase 2 of the main algorithm in $\mathcal{I}$, then the node is a \underline{\emph{new arrival}} of $\mathcal{I}$.
Notice that if some node arrives before slot $L_{\mathcal{I}}$ and begins Phase 2 in slot $L_{\mathcal{I}}$, it is also a new arrival of $\mathcal{I}$ by definition. Therefore, each node is a new arrival of at most two complete intervals.

Lastly, inspired by \cite{bender20}, we define \underline{\emph{truncated length}} for complete intervals.
This facilitates later amortized analysis.
Consider an arbitrary complete interval $\mathcal{I}=[L_{\mathcal{I}},R_{\mathcal{I}}]$. The length of $\mathcal{I}$ is $l_{\mathcal{I}}=R_{\mathcal{I}}-L_{\mathcal{I}}+1$. The truncated length $\bar{l}_{\mathcal{I}}$ is defined in the following way: (1) if the number of new arrivals of $\mathcal{I}$ is at most $al_{\mathcal{I}}/(64cc_3c_1f(l_{\mathcal{I}}))$, and the number of jammed slots during $\mathcal{I}$ is at most $al_{\mathcal{I}}/(64cc_3c_1g(l_{\mathcal{I}}))$, and the number of success during $\mathcal{I}$ is less than $l_{\mathcal{I}}/(32cc_3(t_0+2))$, then $\bar{l}_{\mathcal{I}}=l_\mathcal{I}$; (2) otherwise $\bar{l}_{\mathcal{I}}=0$. Here, $a,c,c_1,c_3,t_0$ are constants to be specified in later analysis; $f$ and $g$ are the functions used by the algorithm, see Subsection~\ref{subsec-alg-description}. (Recall $a$ and $c_3$ are also used in describing the algorithm.)

\subsection{Bounding the Truncated Length of Complete Intervals}

The main goal of this subsection is to show the truncated length of any complete interval is likely to be small. In particular, for any sufficiently large integer $t$, the probability that the truncated length of a complete interval reaching $t$ is at most $1/\textrm{poly}(t)$. Intuitively, this means each complete interval is able to maintain desirable throughput: either the adversary jams a lot of slots or injects a lot of new nodes, or the algorithm generates sufficiently many successes.

We first introduce a technical lemma bounding the sum of a set of dependent random variables. It will be used several times in remaining analysis. We defer its proof to the appendix.

\begin{lemma}\label{lem: special case of sum of dependent random variables}
Let $t_0$ be an arbitrary positive integer. Suppose $X_1,X_2,\cdots,X_n$ are $n$ (potentially dependent) random variables. If $\Pr[X_i=t~|~X_1=x_1, X_2=x_2,\cdots, X_{i-1}=x_{i-1}]\leq t^{-13}$ holds for any integer $t\geq t_0$, any $1\leq i\leq n$, and any values $x_1,x_2,\cdots,x_{i-1}$ of $X_1,X_2,\cdots,X_{i-1}$, then with high probability in $n$, we have $\sum_{i=1}^{n} X_i\leq (t_0+2)n$.
\end{lemma}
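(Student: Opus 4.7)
The plan is to decompose each $X_i$ via truncation and then apply a martingale concentration inequality. The hypothesis $\Pr[X_i=t\mid X_1=x_1,\ldots,X_{i-1}=x_{i-1}]\le t^{-13}$ for $t\ge t_0$ does double duty: it controls the conditional mean of $X_i$, and it guarantees that no individual $X_i$ ever grows too large. The dependence between the $X_i$'s rules out a direct Chernoff argument, but the martingale viewpoint handles this cleanly.

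First, I would show $\mathbb{E}[X_i\mid X_1,\ldots,X_{i-1}]\le t_0+1$ uniformly in the history. Splitting the expectation as $\sum_{t<t_0}t\,\Pr[X_i=t\mid\mathrm{hist}]+\sum_{t\ge t_0}t\,\Pr[X_i=t\mid\mathrm{hist}]$, the first sum is at most $t_0-1$ (since the probabilities sum to at most $1$), and the second is at most $\sum_{t\ge t_0}t^{-12}<\zeta(12)<2$. Adding these gives the claim.

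Next, fix a truncation threshold $T=n^{1/6}$ and set $X_i'=\min(X_i,T)$. Summing the tail bound yields $\Pr[X_i>T\mid\mathrm{hist}]=O(T^{-12})$, and a union bound over the $n$ variables gives $\Pr[\exists\,i:X_i\neq X_i']=O(nT^{-12})=O(n^{-1})$. Now form the martingale $M_k=\sum_{i=1}^{k}\bigl(X_i'-\mathbb{E}[X_i'\mid X_1,\ldots,X_{i-1}]\bigr)$; each increment lies in $[-T,T]$, and by the previous paragraph $\mathbb{E}[X_i'\mid\cdot]\le\mathbb{E}[X_i\mid\cdot]\le t_0+1$. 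Azuma--Hoeffding then yields
\[
\Pr\!\left[M_n\ge n/2\right]\le\exp\!\left(-\frac{(n/2)^2}{2nT^2}\right)=\exp\!\left(-n^{2/3}/8\right),
\]
which is high probability in $n$. On the joint event that Azuma succeeds and that $X_i=X_i'$ for every $i$, we obtain $\sum_i X_i=\sum_i X_i'\le(t_0+1)n+n/2<(t_0+2)n$, as required.

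The only delicate point is the parameter choice: $T$ must be large enough that the union bound above stays $1/\mathrm{poly}(n)$, yet small enough that Azuma delivers a deviation of order $n$. Any $T=n^{\epsilon}$ with a sufficiently small constant $\epsilon>0$ does the job, and the $t^{-13}$ decay in the hypothesis leaves ample slack for both requirements (the exponent has been chosen much larger than strictly necessary, presumably to keep downstream union bounds in the paper convenient).
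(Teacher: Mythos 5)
Your proof is correct and takes essentially the same approach as the paper's: bound the conditional expectation by $t_0+1$, truncate at a small polynomial threshold so a union bound controls the exceptional event, then apply a martingale concentration inequality. The only difference is cosmetic — the paper delegates the final concentration step to a black-box lemma (Lemma 3 of Bender et al.\ 2020) after truncating at $n^{0.2}$, whereas you unroll that step into an explicit Azuma--Hoeffding argument with $T=n^{1/6}$, which makes the proof self-contained.
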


We now proceed to bound the truncated length of complete intervals.

\begin{lemma}\label{lem: bounding length of a complete interval}
Consider a complete interval $\mathcal{I}$ that starts at the beginning of slot $L_{\mathcal{I}}$, assume it ends at the end of slot $R_{\mathcal{I}}$. $R_{\mathcal{I}}$ is a random variable. For any integer $t\geq t_0$ where $t_0$ is a sufficiently large constant, we have $\Pr[\bar{l}_{\mathcal{I}}=t]\leq 1/t^{\Omega(1)}$, regardless of the history before slot $L_{\mathcal{I}}$.
\end{lemma}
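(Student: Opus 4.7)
The event $\{\bar{l}_{\mathcal{I}} = t\}$ decomposes into the conjunction of four conditions: $l_{\mathcal{I}} = t$, at most $at/(64cc_3c_1 f(t))$ new arrivals in $\mathcal{I}$, at most $at/(64cc_3c_1 g(t))$ jammed slots in $\mathcal{I}$, and fewer than $t/(32cc_3(t_0+2))$ successes in $\mathcal{I}$. The plan is to treat the two adversary-restraint conditions as given, and show that conditional on the length condition the success-count condition fails except with probability $1/t^{\Omega(1)}$. The engine driving this is the pair of success guarantees already established in Lemmas~\ref{lem: batch with additional nodes} and~\ref{lem: batch without additional nodes}.

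Concretely, I would partition $\mathcal{I}$ into $m = t/s$ consecutive sub-intervals $\mathcal{J}_1,\ldots,\mathcal{J}_m$ of common length $s$, where $s$ is chosen so that $m$ comfortably exceeds $t/(32cc_3(t_0+2))$ but $s$ is still large enough for the lemma hypotheses to bite; a small polynomial $s = t^{\varepsilon}$ should suffice, since all the denominators in the truncated-length thresholds are constants. For each $\mathcal{J}_i$ I would invoke the appropriate success lemma to conclude that at least one success occurs in $\mathcal{J}_i$ with probability $1 - 1/t^{\Omega(1)}$. A union bound over the $m$ sub-intervals then produces at least $m > t/(32cc_3(t_0+2))$ successes with probability $1 - t^{-\Omega(1)}$, contradicting the success-count condition.

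The choice of lemma depends on the type of $\mathcal{I}$. If $\mathcal{I}$ is of type $[b,r]$ (Phase~1) or a Phase~2 segment between two transitions, no synchronized \textsc{batch} is running and I invoke Lemma~\ref{lem: batch without additional nodes} with $k = 0$, treating the active \textsc{backoff} participants as the ``additional'' nodes. If $\mathcal{I}$ is a Phase~3 segment, a synchronized cohort concurrently runs $h^{ctrl}$-\textsc{batch} and $h^{data}$-\textsc{batch}; I apply Lemma~\ref{lem: batch with additional nodes}, counting only successes on the data (non-transition) channel so that the successes we lower-bound do not themselves end the interval prematurely. The tail case $[r+1,e]$ that ends at an ending slot can be handled by a short direct argument, since only a single node remains in that regime.

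The main obstacle will be verifying the contention hypothesis $\tfrac{c'\log s}{s} \leq q_i p_i \leq \tfrac{\log s}{c'}$ of Lemma~\ref{lem: batch with additional nodes} uniformly across all sub-intervals of a Phase~3 interval. This demands an auxiliary bound showing that the synchronized Phase~3 cohort has size $\Theta(r - r_{\text{prev}})$ at the start of Phase~3, together with careful accounting of how many cohort members have already succeeded and departed during $\mathcal{I}$ so far. A secondary technical issue is whether a single value of $s$ works across all three interval types; if not, I would use a case-dependent $s$ and fold the cases together via a final union bound.
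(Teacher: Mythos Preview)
Your sub-interval partition approach has a real gap, and it is precisely the point you flag as ``the main obstacle.'' For a Phase~3 complete interval with $n$ synchronized batch nodes, the data-channel sending probability in slot $i$ is $p_i=1/i$, so the contention is $q_ip_i\approx n/i$. In the first sub-interval $\mathcal{J}_1=[1,t^{\varepsilon}]$ this is $\Theta(n)$, which wildly violates the upper bound $\tfrac{\log s}{c'}$ required by Lemma~\ref{lem: batch with additional nodes}; at the other extreme, if $t\gg n$ then in late sub-intervals $q_ip_i$ can fall below $\tfrac{c'\log s}{s}$. There is no single window length $s$ for which the contention hypothesis holds across all of $[1,t]$, because whether it holds depends on where $i$ sits relative to $n$, not relative to $t$. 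Your proposed fix, an auxiliary bound that the cohort has size $\Theta(r-r_{\mathrm{prev}})$, is not established anywhere and is not obviously true: $n$ is the number of nodes entering Phase~3 at $L_{\mathcal{I}}$, and nothing ties it to the length of the preceding interval.

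This is why the paper's proof is organized around the relationship between $t$ and $n$ rather than around a uniform partition. When $t\gg n$ (the paper's Scenario~I, $t>8cc_3n$), the goal is not ``many data successes'' at all; instead one shows a \emph{single} success on the \emph{control} channel before slot $t$, which forces $l_{\mathcal{I}}<t$ and hence $\bar l_{\mathcal{I}}\neq t$. This requires further casework on where the last batch node succeeds (the slot ``$\mathit{last}$''), including a delicate case where the batch nodes' own control-channel transmissions are treated as an extra source of jamming, and a case that needs a separate claim (Claim~\ref{claim: batch can not be short}) lower-bounding how long the batch survives. Only in the middle regime $t=\Theta(n)$ does the paper argue ``many data successes,'' and even there the argument is not a uniform partition: it tracks the random gaps $X_i$ between consecutive successes, classifies them as heavy-arriving, heavy-jamming, or light, and bounds the light ones via Lemma~\ref{lem: special case of sum of dependent random variables}. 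Finally, your taxonomy ``Phase~1 segment / Phase~2 segment / Phase~3 segment'' is misleading: a complete interval that begins at a transition slot generally contains \emph{both} batch nodes (entering Phase~3) and backoff nodes (in Phase~1 or Phase~2), which is exactly why the paper parameterizes by $n$ rather than by phase.
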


\begin{proof}
By definition, it is easy to verify that in the beginning slot of a complete interval, each active node in the system will start some instance of \textsc{backoff} or \textsc{batch} from scratch. Thus, without loss of generality, we assume complete interval $\mathcal{I}$ begins at slot one. That is, $L_{\mathcal{I}}=1$.

For any interval $\mathcal{I}'=[L_{\mathcal{I}'},R_{\mathcal{I}'}]$, let $n_{[L_{\mathcal{I}'},R_{\mathcal{I}'}]}$ be the number of new arrivals of $\mathcal{I}'$, and $d_{[L_{\mathcal{I}'},R_{\mathcal{I}'}]}$ be the number of jammed slots within $\mathcal{I}'$. In this proof, define function $f'(x)=f(x)/a$ and $g'(x)=g(x)/a$. For any $t\geq t_0$, we have $n_{[1,t]}\leq t/(64cc_3c_1f'(t))$ and $d_{[1,t]}\leq t/(64cc_3c_1g'(t))$, otherwise the lemma trivially holds by definition of truncated length. Therefore, for any interval $\mathcal{I}'=[L_{\mathcal{I}'},R_{\mathcal{I}'}]\subseteq[1,t]$ whose length $|\mathcal{I}'|=R_{\mathcal{I}'}-L_{\mathcal{I}'}+1$ is at least $t/64cc_3$, we have $n_{[L_{\mathcal{I}'},R_{\mathcal{I}'}]}\leq n_{[1,t]}\leq t/(64cc_3c_1 \cdot f'(t))\leq |\mathcal{I}'|/(c_1 \cdot f'(|\mathcal{I}'|))$ since $|\mathcal{I}'|\leq t \leq 64cc_3|\mathcal{I}'|$; similarly, $d_{[L_{\mathcal{I}'},R_{\mathcal{I}'}]}\leq |\mathcal{I}'|/(c_1 \cdot g'(|\mathcal{I}'|))$.

Call the nodes that start Phase 3 in slot $L_{\mathcal{I}}$ as the \emph{batch nodes} of $\mathcal{I}$. Let $n$ be the number of batch nodes of $\mathcal{I}$. (Throughout this proof, we always use $n$ without subscript to refer to the number of batch nodes of $\mathcal{I}$.)

We first focus on the situation $n=0$. If the complete interval is from a beginning slot to a transition slot, we will apply Lemma \ref{lem: batch without additional nodes} on the channel $\alpha$ determined by the parity of $L_{\mathcal{I}}$ (i.e., the set of slots $\{t\in [L_{\mathcal{I}},R_{\mathcal{I}}]:t-L_{\mathcal{I}}+1 \textrm{ is odd}\}$). We argue the conditions of Lemma \ref{lem: batch without additional nodes} are satisfied. Specifically, setting $k=0$ satisfies the first condition; conditions three and four trivially satisfy. Denote the number of nodes that run \textsc{backoff} on channel $\alpha$ in the first $t/4$ slots of $\alpha$ as $n^{\alpha}_{[1,t/4]}$, then $n^{\alpha}_{[1,t/4]}\leq n_{[1,t/2]}\leq n_{[1,t]}\leq t/(4c_1f'(t))\leq (t/4)/((c_1/a)f(t/4))$. Similarly, the bound on the number of jammed slots is also satisfied. Thus, the second condition of the lemma is satisfied. As a result, apply Lemma \ref{lem: batch without additional nodes} and we know, there is at least one success in the first $t/4$ slots of channel $\alpha$, with high probability in $t/4$. This implies $\bar{l}\leq t/2<t$ with high probability in $t$, as desired. If $n=0$ and the complete interval is from a transition slot to a transition slot or from a transition slot to an ending slot, then there must exist some node(s) running Phase 2 in $L_{\mathcal{I}}$, so these nodes can distinguish between the control channel and the data channel. Again we focus on the first $t/4$ slots of the control channel and apply Lemma \ref{lem: batch without additional nodes} with $k=0$ in the first condition. Therefore, there is at least one success in the first $t/4$ slots of the control channel with high probability in $t/4$, implying $\bar{l}\leq t/2<t$ with high probability in $t$.

Assume $n>0$ in the reminder of the proof. Let $last$ be the slot that the last batch node successfully sends its message among the $n$ batch-nodes. (Let $last$ be $t$ if there are still batch nodes at the end of slot $t$). We consider three scenarios according to the value of $t$: (1) $t>8cc_3n$; (2) $c_4n<t\leq 8cc_3n$; and (3) $t\leq c_4n$. Here, $c_4$ is a constant to be specified later. The analysis for scenario two and three is very similar to the proof of Lemma 8 in \cite{bender20}, we defer them to the appendix to avoid redundancy. Here, we focus on scenario one: $n>0$ and $t>8cc_3n$. We further divide this scenario into four cases.

\textit{Case 1: Suppose $last\geq t/2$.} If $last\geq t/2$, we argue the conditions for applying Lemma \ref{lem: batch with additional nodes} are satisfied, hence there is a success on the control channel during time $[t/4,t/2]$ with high probability in $t/4$, implying $\bar{l}\leq t/2<t$ with high probability in $t$. Specifically, since $n\cdot\frac{c_3\log(t/8)}{t/8}\leq\frac{\log(t/8)}{c'}$ when $t\geq 8c'c_3n$, and since $1\cdot\frac{c_3\log(t/4)}{t/4}\geq\frac{c'\log(t/8)}{t/8}$ when $c_3\geq 2c'$, the first condition is satisfied. Since $n_{[1,t/2]}\leq n_{[1,t]}\leq(t/4)/((c_1/a)f(t/4))$, the bound on the number of injected nodes in the second condition is satisfied. Similarly, the bound on the number of jammed slots in the second condition is also satisfied.

\textit{Case 2: Suppose $last<t/2$, and all nodes never run \textsc{backoff} on control channel within time $[1,last]$.} Then either there is no node in the system at end of slot $last$, or nodes only arrive on data channel after the slot in which the second to last batch node succeeded. In the former case we have $\bar{l}=last<t$. As for the latter case, we apply Lemma \ref{lem: batch without additional nodes} to show there is a success within interval $[last+1,7t/8]$ with high probability in $t$. Specifically, we set $k=0$ for the first condition. The second condition is satisfied since the length of $[last+1,7t/8]$ is at least $3t/8$. The third condition is satisfied since in time slot $last+1$ all nodes in the system begin Phase 2.

\textit{Case 3: Suppose $last<t/2$, $t\geq n^{1.25}$, and there exists some node that runs \textsc{backoff} on control channel within time $[1,last]$.} We first introduce an additional type of ``jammed'' slots, and show the number of such ``jammed'' slots can still be bounded by $\frac{7t/8}{c_1g'(7t/8)}$ for later use. More specifically, call a control channel slot ``interfered'' if any batch node sends in the slot. Define $\sigma=\frac{t}{16c_1g'(t)}$. The number of interfered control slots before time $2\sigma$ can be upper bounded by $\sigma$ trivially. Next we upper bound the number interfered slots after time $2\sigma$. Let $x_i$ be the random number of batch nodes that send in the $i$-th control slot, then for all $i\geq\sigma$, $\mathbb{E}[x_i]\le\frac{c_3\log\sigma}{\sigma}\cdot n\leq t^{-0.9}\cdot n\leq t^{-0.9}\cdot t^{0.8}=t^{-0.1}$, where the last inequality is due to $t\geq n^{5/4}$. Let $y_i$ be an indicator random variable taking value one when at least one batch node sends in the $i$-th slot of the control channel. By Markov's inequality, $\Pr[y_i]=\Pr[x_i\geq 1]\leq\mathbb{E}[x_i]\leq t^{-0.1}$. Hence, the expected number of such interfered control slots within time interval $(2\sigma,t]$ is $\sum_{\sigma<i\le t/2}\Pr[y_i]\leq t^{-0.1}\cdot t/2\leq \frac{t}{20c_1g'(t)}$. By a Chernoff bound, with high probability in $t$, $\sum_{\sigma<i\le t/2}{y_i}$ is at most $\frac{t}{16c_1g'(t)}$. Therefore, if we see interfered slots as another kind of jamming, the total number of jammed slots (i.e., interfered slots and jamming from the adversary) in $[1,t]$ is bounded by $\frac{t}{16c_1g'(t)}+\frac{t}{16c_1g'(t)}+\frac{t}{16c_1g'(t)}\leq \frac{3t/16}{c_1g'(7t/8)}$, where the third term is due to $d_{[1,t]}\leq t/(16c_1g'(t))$.

Denote the first time that some node begins \textsc{backoff} on control channel as $first$ (such slot exists by case assumption), then $first\leq last< t/2$. We now apply Lemma \ref{lem: batch without additional nodes} on the control channel within interval $[first,7t/8]$, with $k=0$ for the first condition, and consider both interfered slots and jamming from adversary for the second condition. We can conclude, with high probability in $t/4$, there is a success (due to nodes running \textsc{backoff}, instead of the batch nodes) on the control channel within interval $[first,7t/8]$. This further implies $\bar{l}\leq 7t/8<t$ holds with high probability in $t$.

\textit{Case 4: Suppose $last<t/2$, $t<n^{1.25}$, and there exists some node that runs \textsc{backoff} on control channel within time $[1,last]$.} We first show with high probability in $n$ (thus also in $t$ as $t<n^{1.25}$), $last>4cc_3n$, by invoking the following claim with $c'$ set to $2cc_3$. (Since we seek a lower bound on the number of slots that all $n$ batch nodes successfully sends their messages on the data channel, without loss of generality, assume there is no new arrival or jammed slots during the first $2cc_3n$ slots of the data channel.)

\begin{claim}\label{claim: batch can not be short}
Assume there are $n$ nodes running $h^{data}$-\textsc{batch} which begins at time 1 on a fixed channel, where $h^{data}(x)=1/x$. Then for any constant $c'$, for sufficiently large $n$, there is at least one node that has not succeeded during the first $c'n$ slots of the channel, with high probability in $n$.
\end{claim}

\begin{proof}
Consider the first time slot $\ell$ when there are only $0.5n$ batch nodes in the system. Let $\ell_i$ be the first time slot that $i$ of those nodes have left the system. Define $\ell_0=\ell$, and let $s_i=\ell_{i+1}-\ell_{i}$. We will prove $\Pr[\sum_{0\le i\le 0.5n-1}s_i\le c'n]<1/n^3$, which implies that with high probability in $n$, the batch runs for at least $c'n$ slots.

Since before time slot $\ell$ there are $0.5n$ times of successes, we have $\ell\ge 0.5n$, which means the probability that each node sends after $\ell$ is at most $\frac{1}{0.5n}$. Thus the probability that some of the $0.5n-i$ nodes send in a slot between $\ell_i$ to $\ell_{i+1}$ is at most $(0.5n-i)\cdot\frac{1}{0.5n}$. Since the occurrence of success implies one of them sends a message, we have $\Pr[s_i\ge L]\ge\left(1-p_i\right)^L$ where $p_i=\frac{0.5n-i}{0.5n}$ for any positive integer $L$. 

Define $\bar{s_i}$ as the truncated geometric distribution where $\Pr[\bar{s_i}=L]=p_i(1-p_i)^L$ for $0\le L<\frac{2}{p_i}$, and $\Pr[\bar{s_i}=\frac{2}{p_i}]=(1-p_i)^{\frac{2}{p_i}}$. It is easy to verify $\Pr[s_i\ge L]\ge\Pr[\bar{s_i}\ge L]$ for any $L\ge 0$. Moreover, we have $\mathbb{E}[\bar{s_i}]\ge\frac{1}{2p_i}$. Write $S=\sum_{0\le i\le 0.5n-1}\bar{s_i}$, we have $\mathbb{E}[S]\ge 0.25n\ln(0.5n)$. By applying the Hoeffding's inequality~\cite{dubhashi12}, we get
$$\Pr[S<c'n]\le 2\exp\left(-\frac{2\left(0.25n\ln(0.5n)-c'n\right)^2}{\sum_{0\le i\le 0.5n-1}\left(\frac{n}{0.5n-i}\right)^2}\right)=\frac{1}{n^{\omega(1)}}$$
\end{proof}

Now assume $last\geq 4cc_3n$. We intend to apply Lemma \ref{lem: batch without additional nodes} on the control channel within time interval $[1,t/2]$. For the first condition, $k=last/2$ is a feasible value, since $n\cdot\frac{c_3\log(last/4)}{last/4}\leq\frac{\log(k/2)}{c'}$ when $last\geq 4c'c_3n$, and $1\cdot\frac{c_3\log(last/2)}{last/2}\geq\frac{c'\log k}{k}$ when $c_3\geq c'$. The second condition is satisfied because the considered interval has length $\Theta(t)$. Recall that there is some node running \textsc{backoff} on control channel within time $[1,last]$ by case assumption, so the third condition of Lemma \ref{lem: batch without additional nodes} is satisfied. Therefore, we conclude with high probability in $t$, there is a success on the control channel before time $t/2$, which implies $\bar{l}\leq t/2<t$.
\end{proof}

\textit{Remark.} Claim \ref{claim: batch can not be short} could be of independent interest, in that it showcases $h^{data}$-\textsc{batch}---a standard implementation of binary exponential backoff---cannot send all $n$ messages in $O(n)$ slots, with high probability in $n$. This holds even if $n$ nodes start simultaneously and there is no external interference.
Nevertheless, assuming all nodes start simultaneously, with high probability in $n$, $h^{data}$-\textsc{batch} can send a constant fraction of all $n$ messages in $O(n)$ slots, even if a constant fraction of all these slots are jammed. (See, e.g., the analysis for scenario two in the appendix.)

\subsection{Proof of the Algorithmic Result}

In this subsection, we will first prove our main algorithmic result, and then prove a corollary clarifying the connection between an algorithm's $(f,g)$-throughput and the number of successful transmission it can guarantee.

\begin{proof}[Proof of Theorem \ref{thm:upper-bound}]
Recall the definition of $(f,g)$-throughput. We focus on the first $t$ slots and assume $n_t$ nodes arrive within these slots. By definition of complete interval, the number of active slots  in $[1,t]$ is the summation of the lengths of the complete intervals that any of these $n_t$ nodes involved in, excluding any active slots after $t$. Denote these complete intervals as $\mathcal{I}_1,\mathcal{I}_2,\cdots,\mathcal{I}_{n_t}$, where some may be empty. (The number of such complete intervals is at most $n_t$ since there is at least one success during each complete interval.) Specifically, if $\mathcal{I}_{last}$ is the last complete interval that begins in some slot in $[1,t]$, then all $\{\mathcal{I}_{k}\}_{k>last}$ has length $0$. Since we focus on the number of active slots in the first $t$ slots, we can assume nodes that are still active at the end of slot $t$ are allowed to continue the algorithm after slot $t$ while Eve does not inject new nodes or jam after slot $t$, and bound the number of active slots in this setting instead.

For each complete interval $\mathcal{I}_j$, we use $m_{\mathcal{I}_j}$ to denote the number of successes occurred during $\mathcal{I}_j$, use $n_{\mathcal{I}_j}$ to denote the number of new arrivals during $\mathcal{I}_j$, and use $d_{\mathcal{I}_j}$ to denote the number of jammed slots during $\mathcal{I}_j$. Recall that $l_{\mathcal{I}_j}$ is the length of $\mathcal{I}_j$, and $\bar{l}_{\mathcal{I}_j}$ is the truncated length of $\mathcal{I}_j$. If $l_{\mathcal{I}_j}>\bar{l}_{\mathcal{I}_j}$ occurs (i.e., $\bar{l}_{\mathcal{I}_j}=0$), then by the definition of truncated length, at least one of following three conditions holds: (1) $m_{\mathcal{I}_j}\geq l_{\mathcal{I}_j}/(32cc_3(t_0+2))$, called the \emph{many-success condition}; (2) $n_{\mathcal{I}_j}> l_{\mathcal{I}_j}/(64cc_3c_1\cdot f'(l_{\mathcal{I}_j}))$, called the \emph{heavy-arriving condition}; (3) $d_{\mathcal{I}_j}>l_{\mathcal{I}_j}/(64cc_3c_1\cdot g'(l_{\mathcal{I}_j}))$, called the \emph{heavy-jamming condition}. (As in the proof of Lemma \ref{lem: bounding length of a complete interval}, define function $f'(x)=f(x)/a$ and $g'(x)=g(x)/a$.) Let $\mathcal{C}_1$ (respectively, $\mathcal{C}_2$ or $\mathcal{C}_3$) be the set containing the complete intervals that satisfy the many-success condition (respectively, heavy-arriving condition or heavy-jamming condition).

\marginnote{Remove usage of $f'$ and $g'$ from the whole paper.}
We intend to bound the total length of all intervals in $\mathcal{C}_1$, $\mathcal{C}_2$, and $\mathcal{C}_3$. But before that, we first show any complete interval that satisfies the heavy-arriving condition or the heavy-jamming condition will have length $O(t)$, otherwise the theorem already holds. Recall that $f$ and $\log g$ are sub-logarithmic, thus for any constant $\hat{c}>0$, there exists some constant $c_0$ such that for any $x\in\mathbb{N}^+$, we have$|f(\hat{c}x)-f(x)|\leq c_0$, implying $f'(\hat{c}x)/f'(x)\leq c_0$; and $|\log g(\hat{c}x)-\log g(x)|\leq c_0$, implying $g'(\hat{c}x)/g'(x)\leq 2^{c_0}$. Let $\gamma_0=2^{c_0}\cdot 64cc_3c_1$. Now, if there exists some $\mathcal{I}_j\in\mathcal{C}_3$ satisfying $l_{\mathcal{I}_j}\geq\gamma t$ for some $\gamma>\gamma_0$, then by the definition of the heavy-jamming condition, $d_{l_{\mathcal{I}_j}}>\gamma t/(64cc_3c_1\cdot g'(\gamma t))\geq 2^{c_0}t/(2^{c_0}\cdot g'(t))\geq t/g'(t)$, which further implies $n_tf(t)+d_tg(t)\geq ad_tg'(t)\geq ad_{l_{\mathcal{I}_j}}\cdot g'(t)> at\geq t$, thus the number of active slots in $[1,t]$ is trivially at most $n_tf(t)+d_tg(t)$, and the theorem is proved. Similarly, if there exists some $\mathcal{I}_j\in\mathcal{C}_2$ satisfying $l_{\mathcal{I}_j}\geq\gamma t$ for some $\gamma>\gamma_0$, then by the definition of the heavy-arriving condition, $n_{l_{\mathcal{I}_j}}= \gamma t/(64cc_3c_1\cdot f'(\gamma t))\geq 2^{c_0}t/(c_0\cdot f'(t))\geq t/f'(t)$, which further implies $n_tf(t)+d_tg(t)>t$, thus the theorem trivially holds. Therefore, from now on, we assume there is no $\mathcal{I}_j$ that satisfies the heavy-arriving condition or the heavy-jamming condition but with length larger than $\gamma t$.

We now bound the total length of all intervals in $\mathcal{C}_1$, $\mathcal{C}_2$, and $\mathcal{C}_3$:
\textit{(1) Complete intervals satisfying many-success condition.} Since $\sum_{\mathcal{I}_j \in \mathcal{C}_1} m_{\mathcal{I}_j}\leq n_t$, we have
$\sum_{\mathcal{I}_j\in\mathcal{C}_1}l_{\mathcal{I}_j}
\leq\sum_{\mathcal{I}_j\in\mathcal{C}_1}m_{\mathcal{I}_j}\cdot 32cc_3(t_0+2)
\leq 32cc_3(t_0+2)\cdot n_t$.
\textit{(2) Complete intervals satisfying heavy-arriving condition.} Since $\sum_{\mathcal{I}_j \in \mathcal{C}_2} n_{\mathcal{I}_j}\leq 2n_t$ (as each node is a new arrival for at most two distinct complete intervals), we have $\sum_{\mathcal{I}_j\in\mathcal{C}_2}l_{\mathcal{I}_j}
\leq\sum_{\mathcal{I}_j\in\mathcal{C}_2}n_{\mathcal{I}_j}\cdot(64cc_3c_1)\cdot f'(l_{\mathcal{I}_j})
\leq\sum_{\mathcal{I}_j\in\mathcal{C}_2}n_{\mathcal{I}_j}\cdot(64cc_3c_1)\cdot f'(\gamma_0 t)
\leq 2^{c_0}\cdot 128cc_3c_1\cdot n_t\cdot f'(t)$.
\textit{(3) Complete intervals satisfying heavy-jamming condition.} Since $\sum_{\mathcal{I}_j \in \mathcal{C}_3} d_{\mathcal{I}_j}\leq d_t$, we have $\sum_{\mathcal{I}_j\in\mathcal{C}_3}l_{\mathcal{I}_j}
\leq\sum_{\mathcal{I}_j\in\mathcal{C}_3}d_{\mathcal{I}_j}\cdot(64cc_3c_1)\cdot g'(\mathcal{I}_j)
\leq\sum_{\mathcal{I}_j\in\mathcal{C}_3}d_{\mathcal{I}_j}\cdot(64cc_3c_1)\cdot g'(\gamma_0 t)
\leq 2^{c_0}\cdot 64cc_3c_1\cdot d_t \cdot g'(t)$.

As the final preparation for bounding total number of active slots, we show $\sum_{j=1}^{n_t} \bar{l}_{\mathcal{I}_j}$ is $O(n_t)$. For any $j\in [n]$, if $\mathcal{I}_j$ is not empty, then by Lemma \ref{lem: bounding length of a complete interval}, conditioned on any history up to the beginning of $\mathcal{I}_j$, we have $\Pr[\bar{l}_{\mathcal{I}_j}=t]\leq t^{-13}$ for $t\geq t_0$; and if $\mathcal{I}_j$ is empty (which implies $\bar{l}_{\mathcal{I}_j}=0$), then again we have $\Pr[\bar{l}_{\mathcal{I}_j}=t]=0\leq t^{-13}$. Apply Lemma \ref{lem: special case of sum of dependent random variables} and we know, with high probability in $n_t$, $\sum_{j=1}^{n_t} \bar{l}_{\mathcal{I}_j}\leq (t_0+2)n_t$.

In conclusion, if we set $a=(t_0+2)+32cc_3(t_0+2)+128cc_3c_1\cdot 2^{c_0}$ to be a sufficiently large constant, then we have
$\sum_{j=1}^{n_t}l_{\mathcal{I}_j}
\leq \sum_{j=1}^{n_t}\bar{l}_{\mathcal{I}_j} + \sum_{\mathcal{I}_j\in\mathcal{C}_1}l_{\mathcal{I}_j} + \sum_{\mathcal{I}_j\in\mathcal{C}_2}l_{\mathcal{I}_j} + \sum_{\mathcal{I}_j\in\mathcal{C}_3}l_{\mathcal{I}_j}
\leq  a\cdot n_t\cdot f'(n_t) + a\cdot d_t\cdot g'(t)$.
Recall $f(x)=af'(x)$, $g(x)=ag'(x)$, so the number of active slots in first $t$ slots in at most $n_t\cdot f(t)+d_t\cdot g(t)$.
\end{proof}


Next, we state and prove the following corollary connecting an algorithm's $(f,g)$-throughput and the number of successful transmission it can guarantee.

\begin{corollary}\label{cor:throughput-and-successes}
Assume nodes run an algorithm that achieves $(f,g)$-throughput. For interval $[1,t]$, an adversary strategy is called "\emph{smooth}" if for any $1\le j\le t-1$, the number of nodes arrived in $[t-j,t]$ is small enough in $O(j/f(j))$ and the number of jammed slots in $[t-j,t]$ is small enough in $O(j/g(j))$. Then under any "smooth" adversary strategy $\mathcal{B}$, for any $1\le j\le t-1$, all nodes arrived before slot $t-j$ will leave the system by the end of slot $t$, with high probability in $j$.
\end{corollary}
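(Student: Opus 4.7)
The plan is to argue by contradiction using a \emph{restart} trick. Suppose some node that arrived strictly before slot $t-j$ is still in the system at the end of slot $t$; then every slot in $[t-j,t]$ must be active, since that node is present throughout. Let $T\le t-j$ be the smallest index such that every slot in $[T,t]$ is active. By minimality, either $T=1$ or slot $T-1$ is inactive, so in both cases no node from the past is present at the start of slot $T$. Hence every node appearing in $[T,t]$ is a fresh arrival there, and the algorithm execution restricted to $[T,t]$ is distributionally identical to a fresh run of length $L:=t-T+1\ge j+1$ against an adaptive adversary (the smoothness hypothesis restricted to $[T,t]$).

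Let $n'$ and $d'$ denote the numbers of arrivals and jammed slots inside $[T,t]$. Applying the $(f,g)$-throughput guarantee of Theorem~\ref{thm:upper-bound} to this fresh run of length $L$ bounds the number of active slots in $[T,t]$ by $n'f(L)+d'g(L)$, with high probability in $n'$. By the smoothness hypothesis applied to the sub-window $[T,t]\subseteq[t-(L-1),t]$, we have $n'\le\alpha L/f(L)$ and $d'\le\alpha L/g(L)$, where $\alpha$ is the constant implicit in the $O(\cdot)$ of the smoothness definition; the sub-logarithmic properties of $f$ and of $\log g$ absorb the passage from $L-1$ to $L$ into a constant factor. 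Choosing $\alpha$ small enough that $2\alpha<1$ yields $n'f(L)+d'g(L)<L$, contradicting the assumption that all $L$ slots in $[T,t]$ are active.

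The main obstacle is upgrading the ``high probability in $n'$'' of Definition~\ref{def:f-g-throughput} into ``high probability in $j$''. The key observation is that whenever the $(f,g)$-throughput bound happens to hold, sustaining $L\ge j+1$ consecutive active slots already forces $L\le n'f(L)+d'g(L)$, so together with $d'\le\alpha L/g(L)$ this gives $n'\ge\Omega(L/f(L))=\Omega(j/\mathrm{polylog}(j))$. Consequently, the contradictory event either cannot occur at all (via the argument of the previous paragraph) or only occurs when the $(f,g)$-throughput bound fails on a fresh run with $n'\ge\Omega(j/\mathrm{polylog}(j))$, whose probability is at most $1/(n')^c=1/j^{\Omega(1)}$. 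A final union bound over the possible restart slots $T\le t-j$, combined with the fact that $T$ is determined by the past history and so only one fresh run needs to be analyzed per realization, absorbs the remaining polynomial factors and yields the desired high-probability-in-$j$ bound.
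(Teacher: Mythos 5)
Your high-level plan is the same as the paper's: pick the last inactive slot before $t$, restart the analysis there, and apply the $(f,g)$-throughput guarantee together with smoothness to the resulting interval. You also correctly identify the key obstacle: Definition~\ref{def:f-g-throughput} gives a bound that holds only "with high probability in $n'$", and $n'$ (the number of arrivals in the restart window) may be tiny.

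However, your attempted fix for that obstacle does not work. You argue that when the bound holds, $L$ consecutive active slots force $n'\geq\Omega(L/f(L))$, and then assert that the contradictory event "only occurs when the $(f,g)$-throughput bound fails on a fresh run with $n'\ge\Omega(j/\mathrm{polylog}(j))$". The lower bound on $n'$ was derived \emph{only} under the assumption that the throughput bound holds; once you are in the case where the bound fails, you have no lower bound on $n'$ at all. Indeed, a single node arriving in a window with no jamming gives $n'=1$, $d'=0$; the guarantee "with high probability in $1$" is vacuous, so the bound can fail freely, and nothing forces $n'$ to be large. Your final union bound therefore bounds the failure probability by $1/(n')^{\Omega(1)}$ for an unknown, possibly constant, $n'$ — not by $1/j^{\Omega(1)}$.

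The paper's proof closes this gap with a small but essential modification: it passes from the smooth adversary $\mathcal{B}$ to an augmented adversary $\mathcal{B}'$ that additionally injects $j^{0.5}$ phantom nodes in the final slot $t$ and jams slot $t$. Jamming slot $t$ ensures these phantoms cannot succeed and cannot change the outcome for any earlier node, so the execution on $[1,t-1]$ is unchanged and proving the corollary for $\mathcal{B}'$ proves it for $\mathcal{B}$. At the same time, the restart window now automatically contains at least $j^{0.5}$ arrivals, so the $(f,g)$-throughput guarantee applies with high probability in $j^{0.5}$, hence in $j$. The extra $j^{0.5}$ arrivals and one extra jammed slot are absorbed into the $O(\cdot)$ constants of the smoothness condition because $f$ and $\log g$ are sub-logarithmic and the window has length greater than $j$. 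Without this injection-plus-jam device, the corollary as you have argued it is not established.
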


\begin{proof}
Fix an arbitrary integer $j\in[1,t-1]$. For any $\mathcal{B}$, construct another adversary strategy $\mathcal{B'}$ in the following way. Besides the same node arrivals and jammed slots as $\mathcal{B}$, $\mathcal{B'}$ injects another $j^{0.5}$ nodes in the last slot (i.e., slot $t$) and jams the last slot. Notice that the execution under adversary strategy $\mathcal{B}$ and $\mathcal{B'}$ are identical until the end of slot $t-1$. Moreover, in $\mathcal{B'}$ nodes cannot succeed in slot $t$ since that slot is jammed. Thus, if we can prove the claim in $\mathcal{B'}$ then the claim also holds in $\mathcal{B}$. Therefore, we focus on $\mathcal{B'}$ in the reminder of the proof.

Suppose to the contrary, there exists some node arrived before slot $t-j$ that leaves the system after slot $t$. Then all slots in $[t-j,t]$ are active. Therefore, there must exist some integer $k\in[0,t-j-1]$ such that all slots in $[k+1,t]$ are active and slot $k$ is inactive. (In case $k=0$ then all slots in $[1,t]$ are active.) Consider interval $[k+1,t]$. Since the adversary strategy $\mathcal{B}$ is smooth, for $\mathcal{B'}$ the number of arrived nodes and jammed slots in $[k+1,t]$ is small enough in $O((t-k)/f(t-k))+j^{0.5}=O((t-k)/f(t-k))$ (recall $f$ is sub-logarithmic and $t-k>j$) and $O((t-k)/g(t-k))+1=O((t-k)/g(t-k))$, respectively. Therefore, according to Theorem~\ref{thm:upper-bound} and the definition of $(f,g)$-throughput, (at least) with high probability in $j^{0.5}$ (thus also in $j$), the number of active slots in $[k+1,t]$ is (strictly) less than $t-k$, a contradiction. Summing over the distribution of $k$ we get with high probability in $j$, all nodes arrived before slot $t-j$ will leave the system by the end of slot $t$.
\end{proof}

\section{Impossibility Results}\label{sec-lower-bound}

In this section, we will prove our impossibility result on the trade-off between the best possible throughput and the severity of jamming, we will also prove another impossibility result demonstrating the necessity of \textsc{backoff} style procedures in attending optimal throughput when jamming is present.

\smallskip\noindent\textbf{Main impossibility result.} The impossibility result on the trade-off exploits the following dilemma that all contention resolution algorithms must confront: on the one hand, when few nodes (e.g., only one node) are in the system, each node's broadcasting probability should be sufficiently high, otherwise successes will not happen fast enough; on the other hand, however, when a lot of nodes are in the system, each node's broadcasting probability should be sufficiently low, otherwise contention among themselves would prevent successes from happening.

The following lemma captures the case that when many nodes are in the system, each node's broadcasting probability cannot be too high for too long.

\begin{lemma}\label{lem:large-contention}
Let $h$ be a sub-logarithmic function and $\mathcal{A}$ be a contention resolution algorithm. Consider a node that runs $\mathcal{A}$. Suppose in expectation, the node broadcasts $\omega\left(h(t)\log t\right)$ times in the first $t$ slots (since its activation) before it hears the first success, then $\mathcal{A}$ does not achieve $(h,g)$-throughput for any $g$.
\end{lemma}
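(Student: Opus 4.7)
I would argue by contradiction: assume $\mathcal{A}$ achieves $(h,g)$-throughput for some $g$, and construct an adversarial execution violating this bound. The adversary injects a batch of $N$ identical nodes at the beginning of slot $1$ and performs no jamming thereafter, where $N$ will be chosen as an appropriate slowly growing function of the target time horizon. Since $d_t = 0$ for this adversary, the conclusion that $(h,g)$-throughput fails for some $g$ will follow from violating the no-jamming bound $n_t h(t)$, which is independent of $g$.

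The heart of the argument relies on two observations. First, because collision detection is unavailable, a node cannot distinguish silence from collision, so until the first success occurs in this $N$-node execution every node observes identical feedback and therefore follows, independently, the same distribution over broadcast schedules that would arise under all-silent feedback---call this the \emph{silence-only simulation}. By the hypothesis, a single node's expected number of broadcasts during the first $t$ slots of this simulation is $\omega(h(t)\log t)$. Second, letting $T_1$ denote the random time of the first success, every slot in $[1,T_1]$ is active (all $N$ nodes are still present), so $(h,g)$-throughput with $d_t=0$ forces $T_1 \le N\,h(T_1)$ with high probability in $N$; invoking that $h$ is sub-logarithmic then yields $T_1 = O(Nh(N))$.

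I would next count broadcasts during $[1,T_1]$. Because the silence-only simulation governs every node's behavior throughout this window and the $N$ schedules are i.i.d., an application of Theorem~\ref{thm:mcdiarmid} to the sum of the $N$ per-node broadcast counts shows that with high probability the total broadcast count is within a constant factor of its mean, $N\cdot\omega(h(T_1)\log T_1)$. Each slot admits at most $N$ broadcasts, so the number of slots in $[1,T_1]$ containing at least one broadcast is at least $\omega(h(T_1)\log T_1)$; all such slots are active. Combined with the upper bound $T_1 \le N\,h(T_1)$ this yields $\omega(h(T_1)\log T_1) \le N\,h(T_1)$, i.e., $N = \omega(\log T_1)$. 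Choosing $N$ to grow slowly enough in $t$ that $N = o(\log T_1)$---for example, a suitable slowly growing function calibrated against $T_1 = O(Nh(N))$---produces the desired contradiction.

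The main obstacle is the simultaneous calibration of $N$ and $T_1$: the silence-only lower bound $\omega(h(T_1)\log T_1)$ is asymptotic in $T_1$, so one must show that $T_1$ grows at least to some intermediate threshold before the hypothesis bites; meanwhile $N$ must remain small enough for $\omega(\log T_1) \le N$ to fail. I would address this by first conditioning on a high-probability event that $T_1$ lies in an intermediate window $[\tau, T^{+}]$ where $\tau$ is chosen so that $\omega(h(\tau)\log\tau)$ is already dominant and $T^{+} = O(Nh(N))$ is the ceiling forced by throughput; the sub-logarithmic regularity of $h$ (in particular $h(T^{+}) = \Theta(h(N))$ and $\log T^{+} = \Theta(\log N)$) provides enough flexibility to make this window nonempty. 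The delicate step is the passage from expected to high-probability broadcast counts under this conditioning, which I would handle by applying McDiarmid to the $N$ i.i.d.\ silence-only schedules truncated at $T^{+}$, so each per-node schedule changes the total by at most $T^{+}$ and the concentration radius $\Theta(\sqrt{N}\cdot T^{+})$ is dwarfed by the mean $N\cdot\omega(h(T^{+})\log T^{+})$ for the chosen parameters.
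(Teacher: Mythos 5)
The key obstacle is more serious than calibration delicacy: the batch-injection construction cannot yield a contradiction. Your argument combines the throughput bound $T_1 \le N h(T_1)$ with the broadcast-count lower bound to obtain $\omega(h(T_1)\log T_1) \le N h(T_1)$, i.e.\ $N = \omega(\log T_1)$, and you then want to choose $N = o(\log T_1)$. But the throughput bound you already invoked forces $T_1 \le N h(T_1) \le CN\log T_1$ (since $h$ is sub-logarithmic), hence $T_1 = O(N\log N)$ and $\log T_1 = O(\log N)$. The requirement $N = \omega(\log T_1)$ therefore reduces to $N = \omega(\log N)$, which holds automatically for large $N$; the "contradiction" $N = o(\log T_1)$ is unachievable for any choice of $N$ consistent with the throughput bound. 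Put differently, the only lower bound on active slots your argument extracts is $T_1 \ge \omega(h(T_1)\log T_1)$, and this is perfectly compatible with the upper bound $T_1 \le N h(T_1)$ because $T_1$ itself---rather than being an adversary-chosen horizon---is squeezed between these two estimates. A secondary issue is that $T_1$ can actually be small (e.g.\ if $x_1$ is moderate and $N x_1 \approx 1$, a success occurs in $O(1)$ slots), so the asymptotic $\omega(h(T_1)\log T_1)$ has no bite; the hypothesis only controls the silence-only prefix sums, not a high-probability lower bound on $T_1$.

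The paper avoids this trap by making the time horizon $t$ a free parameter chosen by the adversary and then ensuring that no success ever occurs in $[1,t]$, so that all $t$ slots are active. It does this with two interleaved injections: $(3\log t)/x_1$ nodes per slot for the first $\sqrt{t}$ slots keep initial contention above $\Omega(\log t)$, and an additional $t/(2h(t))$ nodes placed uniformly at random in $[1,t]$ ensure that for every slot $k>\sqrt{t}$, the contention contributed by random-injected nodes is $(t/(2h(t)))\cdot E_k/t = \omega(\log t)$, where $E_k = \sum_{j\le k} x_j$. This per-slot contention control---using prefix sums rather than a total broadcast count over a random window---is what keeps the system success-free for the entire interval $[1,t]$. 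The total injected count is still only $O(t/h(t))$, so the throughput bound would cap active slots at $t/1.5 < t$, giving the contradiction. Your batch construction has no mechanism to keep contention high past roughly $N$ slots, which is exactly why $T_1$ is bounded above and the argument collapses.
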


\begin{proof}
Let $x_i$ be the probability that $\mathcal{A}$ sends a messages in the $i$-th slot since the execution of $\mathcal{A}$ starts, assuming no successes occur in slots $1$ to $i-1$.
Consider an adversary strategy that injects $(3\log{t})/x_1$ nodes in each of the first $\sqrt{t}$ slots---call these nodes ``batch-injected'', and injects another $t/(2h(t))$ nodes in the first $t$ slots uniformly
at random---call these nodes ``random-injected''. We show no success occurs in the first $t$ slots, with high probability in $t$.

For each of the first $\sqrt{t}$ slots, since each newly arrived node will send in the slot with probability $x_1$ and there are at least $(3\log{t})/x_1$ newly arrived nodes, the probability of a success occurring is at most $1/t^3$. (This is because, in each such slot, at least $(3\log{t})/x_1-1$ newly arrived batch-injected nodes need to choose not send.)

Consider a slot $k>\sqrt{t}$ and a random-injected node $u$, the probability that $u$ sends in this slot is $\sum_{1\le j\le k}(1/t)\cdot x_{k-j+1}=E_k/t$. (The probability that "$u$ arrives in the $j$-th slot" times the probability that "$u$ sends in the $(k-j+1)$-th slot since its arrival".) Here, $E_k=\sum_{1\le j\le k}x_{k-j+1}=\sum_{1\le j\le k}x_j$. Due to lemma assumption, we know $E_k=\omega(h(k)\cdot\log{k})$. Since $h$ is sub-logarithmic and $k>\sqrt{t}$, we have $h(k)=\Omega(h(t))$, thus $E_k=\omega(h(t)\cdot\log{t})$. As a result, the expected contention in slot $k$ is at least $(t/(2h(t)))\cdot(E_k/t)=\omega(\log t)$, implying the probability of success in slot $k$ is at most $1/t^3$.

Apply a union bound, we know with high probability in $t$ there are no successes in the first $t$ slots.

Assume $t$ is sufficiently large. Since $h$ is sub-logarithmic, $h(t)=O(\log t)$ and is at least $10$. Hence, the number of injected nodes in the first $t$ slots is $(3\log{t})/x_1\cdot\sqrt{t} + t/(2h(t)) \le t/(1.5h(t))$. This means $n_t\le t/(1.5h(t))\leq t$ and $d_t=0$. Now, if algorithm $\mathcal{A}$ achieves $(h,g)$-throughput, then the number of active slots in the first $t$ slots is bounded by $t/(1.5h(t))\cdot h(n_t)\leq t/(1.5h(t))\cdot h(t)=t/1.5$. However, we know all the $t$ slots are active, a contradiction.
\end{proof}

To prove the impossibility result, what remains is to show that in case few nodes are in the system, each node must generate enough contention to ensure a success can happen fast enough.

\begin{proof}[Proof of Theorem \ref{thm:lower-bound}]
Suppose algorithm $\mathcal{A}$ achieves $(f,g)$-throughput and satisfies: $f,\log(g)$ are both sub-logarithmic and $f(x)=o((\log{x})/\log^2g(x))$. We will show, in expectation, $\mathcal{A}$ broadcasts $\Omega((\log^2{t})/\log^2 g(t))$ times in the first $t$ slots
before any success occurs. Together with Lemma \ref{lem:large-contention}, the theorem is immediate.

Consider an interval of $t$ slots. Consider an adversary strategy that injects one node $u$ in the first slot,
\marginnote{Do we need to inject nodes in the last slot of the interval?}
and jams the first $t/(4g(t))$ slots as well as the last slot. The adversary also jams another $t/(4g(t))$ slots which are chosen uniformly at random from slot interval $(t/(4g(t)),t]$. Since there are at most $t/(2g(t))+1$ jammed slots and $t/(3f(t))+1$ injected nodes, and since $\mathcal{A}$ achieves $(f,g)$-throughput, the number of active slots in the first $t$ slots is at most $t-1$ with high probability in $t/f(t)$, for sufficiently large $t$. Therefore, there is at least one success in the first $t$ slots with high probability in $t/f(t)>\sqrt{t}$, as $f(t)=O(\log t)$.

Denote $k(t)$ as the number of times node $u$ broadcasts in slots $(t/(4g(t)),t]$. Since in slot interval $(t/(4g(t)),t]$ the adversary randomly chooses $t/(4g(t))$ slots to jam, the probability that no success occur in these slots is at least $\sum_{k\le\frac{t}{8g(t)}}\frac{\Pr[k(t)=k]}{(8g(t))^k}$. Since there must be a success in these slots with probability at least $1-1/\sqrt{t}$, we have $\sum_{k\le\frac{t}{8g(t)}}\frac{\Pr[k(t)=k]}{(8g(t))^k}\le\frac{1}{\sqrt{t}}$. Since  $t/(8g(t))\ge\log_{8g(t)}\frac{\sqrt{t}}{2}$ for the family of $g$ we care, we have:
$$\sum_{k\le\frac{t}{8g(t)}}\frac{\Pr[k(t)=k]}{(8g(t))^k}\geq
\sum_{k\le\log_{8g(t)}\frac{\sqrt{t}}{2}}\frac{\Pr[k(t)=k]}{(8g(t))^{\log_{8g(t)}\sqrt{t}/2}}=
\sum_{k\le\log_{8g(t)}\frac{\sqrt{t}}{2}}\frac{\Pr[k(t)=k]}{\sqrt{t}/2}=
\frac{\Pr\left[k(t)\le\log_{8g(t)}\left(\sqrt{t}/2\right)\right]}{\sqrt{t}/2}$$
If $\Pr\left[k(t)\le\log_{8g(t)}\frac{\sqrt{t}}{2}\right]>\frac{1}{2}$, then $\sum_{k\le\frac{t}{8g(t)}}\frac{\Pr[k(t)=k]}{(8g(t))^k}>\frac{1/2}{\sqrt{t}/2}=\frac{1}{\sqrt{t}}$, which is a contradiction. As a result, we know $\Pr\left[k(t)\le\log_{8g(t)}\frac{\sqrt{t}}{2}\right]\le\frac{1}{2}$, which means $\Pr\left[k(t)\ge\log_{8g(t)}{\frac{\sqrt{t}}{2}}\right]\ge\frac{1}{2}$, implying $\mathbb{E}[k(t)]\ge\frac{1}{2}\log_{8g(t)}\frac{\sqrt{t}}{2}$.

\marginnote{This paragraph (or this entire proof) needs better presentation.}
By an argument similar as above, for any $i\in[l]$ where $l=\log_{4g(t)}t$, in slot interval $\left(\frac{t}{(4g(t))^{i}}\middle/\left(4g\left(\frac{t}{(4g(t))^{i}}\right)\right),\frac{t}{(4g(t))^i}\right]$, the expected number of times node $u$ broadcasts is at least $\frac{1}{2}\log_{8g(t/(4g(t))^i)}\frac{\sqrt{t/(4g(t))^i}}{2}$.
Since $g$ is non-decreasing, we have $\frac{t}{(4g(t))^{i}}/\left(4g\left(\frac{t}{(4g(t))^{i}}\right)\right)\ge \frac{t}{(4g(t))^{i+1}}$, implying in expectation node $u$ must broadcast $\frac{1}{2}\log_{8g(t/(4g(t))^i)}\frac{\sqrt{t/(4g(t))^i}}{2}$ times during slots $\left(\frac{t}{(4g(t))^{i+1}},\frac{t}{(4g(t))^i}\right]$.
Let $b_t$ be the expected number of times $u$ broadcasts in the first $t$ slots, we have:
$$b_t
\ge\sum_{0\le i\le l/2}\frac{1}{2}\cdot\frac{\log\left(\sqrt{\frac{t}{(4g(t))^i}}\middle/2\right)}{\log\left(8g\left(\frac{t}{(4g(t))^i}\right)\right)}
\ge\frac{l}{4}\cdot\frac{\log\left(\sqrt{\frac{t}{(4g(t))^{l/2}}}\middle/2\right)}{\log\left(8g\left(\frac{t}{(4g(t))^0}\right)\right)}
=\Omega\left(l\cdot\frac{\log t}{\log g(t)}\right)
=\Omega\left(\frac{\log^2t}{\log^2g(t)}\right)$$

Recall the first paragraph of this proof, the theorem is proved.
\end{proof}

\textit{Remark.} Our proof critically relies on the absence of collision detection. To see this, for each node $u$, denote $x^u_i$ as the probability that it broadcasts in the $i$-th slot since its activation, assuming no success occurs in slots 1 to $i-1$. Without collision detection, for any two nodes $v$ and $w$, for any $i\geq 1$, the distribution of $x^v_i$ and $x^w_i$ are identical, as $v$ and $w$ receive identical channel feedback. Thus the superscript is not necessary: we can use $x_i$ to denote $x^u_i$. However, with collision detection, this no longer holds. For example, if $v$ hears silence in the first slot since its activation and $w$ hears collision in the first slot since its activation, then $x^v_2$ and $x^w_2$ might differ! More fundamentally, collision detection breaks the dilemma the impossibility result proof exploits: nodes can \emph{differentiate} whether there are few nodes/low contention (hearing silence), or there are many nodes/high contention (hearing collision), and take different actions.

\smallskip\noindent\textbf{Necessity of \textsc{backoff}.} Recall that in our algorithm, two variants of the standard backoff procedure are used: (1) \textsc{batch}, which sends with probability $1/i$ in slot $i$; and (2) \textsc{backoff}, which chooses several slots in a stage to send and then increases stage length. A critical difference between these two procedures is that \textsc{backoff} is \emph{adaptive}: in a slot, the sending probability of a node depends on the previous sending behavior of the node. The next theorem shows that \textsc{backoff} is necessary when jamming exists: non-adaptive sending pattern cannot provide optimal throughput.

\begin{theorem}\label{thm:non-adaptive-lower-bound}
For algorithm $\mathcal{A}$, if $\mathcal{A}$ will send with pre-defined probability $a_i$ in the $i$-th slot since it starts and before any success is heard, then for any function $f,g$ such that $f,\log(g)$ are both sub-logarithmic and $f(x)=o\left(\frac{\log{x}}{\log g(x)}\right)$, algorithm $\mathcal{A}$ does not achieve $(f,g)$-throughput.
\end{theorem}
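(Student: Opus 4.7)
The plan is to follow the same skeleton as the proof of Theorem~\ref{thm:lower-bound}---lower bound the expected number of broadcasts $\mathcal{A}$ makes before its first success, then invoke Lemma~\ref{lem:large-contention}---while strengthening the per-horizon lower bound from $\Omega(\log^2 T/\log^2 g(T))$ to $\Omega(\log^2 T/\log g(T))$. The extra factor of $\log g(T)$ will come from exploiting non-adaptivity: since $a_1,a_2,\ldots$ are fixed in advance, the adaptive adversary can jam the slots with the largest $a_j$ instead of being forced to jam uniformly at random as in the proof of Theorem~\ref{thm:lower-bound}.

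Concretely, I would set $T_i=T/(4g(T))^i$ and, for each $0\le i\le l=\lfloor(\log_{4g(T)}T)/2\rfloor$, consider the adversary $\mathcal{B}_i$ that injects a single node $u$ in slot~$1$, jams the first $T_{i+1}$ slots, and additionally jams the $B_i=T_i/(4g(T_i))$ slots inside $(T_{i+1},T_i]$ with the largest values of $a_j$. The total amount of jamming is $T_{i+1}+B_i\le T_i/(2g(T_i))\le T/(2g(T))$ using monotonicity of $g$ and of $x/g(x)$. Applying the $(f,g)$-throughput bound on the first $T_i$ slots then gives active slots at most $f(T_i)+T_i/2<T_i$, so $u$ must succeed inside the first $T_i$ slots with high probability; since every slot in $[1,T_{i+1}]$ is jammed, that success lies in $(T_{i+1},T_i]$.

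Because $\mathcal{A}$ is non-adaptive and $u$ is alone, the event ``no success in $(T_{i+1},T_i]$'' has probability \emph{exactly} $\prod_{j\in U_i}(1-a_j)$, where $U_i$ is the un-jammed portion of $(T_{i+1},T_i]$. Setting this product to be inverse polynomial in $T_i$ and taking logarithms gives $\sum_{j\in U_i}a_j=\Omega(\log T_i)$, after separating out the degenerate case in which more than $B_i$ indices of $(T_{i+1},T_i]$ already satisfy $a_j>1/2$ (in which case $\sum_{j\in(T_{i+1},T_i]}a_j=\Omega(B_i)=\Omega(T_i/g(T_i))$ is already overwhelming). Since the intervals $(T_{i+1},T_i]$ are disjoint and $\log T_i=\Theta(\log T)$ for every $i\le l$, summing yields
\[\sum_{j=1}^{T} a_j \;\ge\; \sum_{i=0}^{l}\sum_{j\in(T_{i+1},T_i]} a_j \;=\; \Omega\!\left(l\cdot\log T\right) \;=\; \Omega\!\left(\frac{\log^2 T}{\log g(T)}\right).\]
Since the hypothesis $f(x)=o(\log x/\log g(x))$ gives $\log^2 T/\log g(T)=\omega(f(T)\log T)$, Lemma~\ref{lem:large-contention} with $h=f$ then supplies the contradiction that $\mathcal{A}$ cannot achieve $(f,g)$-throughput.

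The main obstacle is justifying the per-level gain over the proof of Theorem~\ref{thm:lower-bound}: the sharp identity $\Pr[\text{no success in }(T_{i+1},T_i]]=\prod_{j\in U_i}(1-a_j)$ is what makes targeted jamming strictly better than random jamming, and it relies crucially on the independence of the sending events across slots, which is exactly what non-adaptivity buys (under collision detection the corresponding identity fails, matching the remark after Theorem~\ref{thm:lower-bound}). A secondary nuisance, as in the proof of Theorem~\ref{thm:lower-bound}, is interpreting ``with high probability in $n_t$'' when $n_t=1$; following the same convention, I would read it as a $1/\mathrm{poly}(T)$ failure guarantee, which is exactly what the above taking-of-logs needs.
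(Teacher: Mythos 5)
Your proposal is correct and follows the same high-level skeleton as the paper's proof: restrict attention to geometrically nested windows $\bigl(t/(4g(t))^{i+1},\,t/(4g(t))^i\bigr]$, show that $\sum_j a_j$ over each window is $\Omega(\log t)$, sum over the $\Theta(\log t/\log g(t))$ windows, and then invoke Lemma~\ref{lem:large-contention}. The interesting divergence is in \emph{how} you extract a clean product bound on a single window. You inject one node and \emph{targetedly} jam the $T_i/(4g(T_i))$ slots with the largest $a_j$, which (after the case split on whether more than $B_i$ indices have $a_j>1/2$) guarantees that every un-jammed slot in the window satisfies $a_j\le 1/2$, so $\prod_{j\in U_i}(1-a_j)\le 1/\mathrm{poly}(T_i)$ immediately yields $\sum_{j\in U_i} a_j=\Omega(\log T_i)$. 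The paper instead injects \emph{two} nodes in slot~$1$ and does no extra jamming inside the window: the per-slot success probability is then $p_i=2a_i(1-a_i)\le 1/2$ automatically, so the same $(1-p)\ge 4^{-p}$ manipulation applies and, since $a_i\ge p_i/2$, the same window bound falls out. Both mechanisms are valid exploitations of non-adaptivity (the paper needs the $a_i$ to be fixed so that $p_i$ factorizes exactly; you need them to be fixed so that the adversary can rank them in advance), and both recover the same $\Omega(\log^2 t/\log g(t))$ total. Your route is arguably more in the spirit of the remark preceding the theorem, since it makes the ``adversary can pre-target'' intuition explicit.

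One genuine loose end you should tighten rather than hand-wave: with a single injected node (or even two), $n_{T_i}=O(1)$, so the ``with high probability in $n_t$'' clause of Definition~\ref{def:f-g-throughput} is vacuous. The paper's fix is not to reinterpret the probability guarantee but to additionally inject $\Theta\bigl(T_i/f(T_i)\bigr)$ nodes in the last slot of the window and jam that slot: this inflates $n_{T_i}$ to $\mathrm{poly}(T_i)$ and hence makes the failure probability genuinely $1/\mathrm{poly}(T_i)$, while leaving slots $1,\dots,T_i-1$ unaffected and keeping $n_{T_i}f(T_i)+d_{T_i}g(T_i)$ strictly below $T_i$ (with suitably chosen constants). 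You should state this explicitly, since your constants ($B_i+T_{i+1}\le T_i/(2g(T_i))$) leave less slack on the $d_{T_i}g(T_i)$ side than the paper's, so the fraction of boosting nodes must be correspondingly smaller.
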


\begin{proof}[Proof sketch]
Consider an interval of $t$ slots, we will show $\sum_{1\le i\le t}a_i=\Omega\left(\frac{\log^2t}{\log g(t)}\right)$, and then apply Lemma \ref{lem:large-contention} to obtain our conclusion. The following part is similar to the proof of Theorem~\ref{thm:lower-bound}.

For the sake of contradiction, assume $\mathcal{A}$ achieves $(f,g)$-throughput. Consider the adversary strategy that jams the first $\frac{t}{4g(t)}$ slots as well as the last slot, and injects $2$ nodes in the first slot as well as $\frac{t}{4f(t)}$ nodes in the last slot. We can show the probability that there is no success in the first $t$ slots is at most $1/\sqrt{t}$.

Denote $p_i$ as the probability that the $i$-th slot succeeds. Let interval $I=\left(\frac{t}{4g(t)},t\right]$. For every $i\in I$, we know $p_i=2a_i(1-a_i)\leq 1/2$. Due to the analysis in the last paragraph, we also have $\prod_{i\in I}(1-p_i)\le 1/\sqrt{t}$. Notice that when $0\leq p_i\leq 1/2$, we have $\prod_{i\in I}(1-p_i)\geq\prod_{i\in I}4^{-p_i}=4^{-\sum_{i\in I}p_i}$, hence $4^{-\sum_{i\in I}p_i}\leq 1/\sqrt{t}$, implying $\sum_{i\in I}p_i=\Omega(\log{t})$. Since $a_i\ge p_i/2$, we have $\sum_{i\in I}a_i=\Omega(\log t)$. Recall the proof of Theorem \ref{thm:lower-bound}, consider intervals $\left(\frac{t}{\left(4g(t)\right)^{i+1}},\frac{t}{\left(4g(t)\right)^i}\right]$, we have $\sum_{1\leq i\leq t}a_i=\Omega\left(\frac{\log^2 t}{\log g(t)}\right)$. Recall the first paragraph of this proof, the theorem is proved.
\end{proof}

\begin{acks}
The authors would like to thank Prof.\ Yitong Yin for the comments and suggestions that greatly improve the overall quality of the paper.
\end{acks}

\bibliographystyle{ACM-Reference-Format}
\bibliography{./podc-draft-3}

\appendix
\section*{Appendix}

\begin{proof}[\underline{Proof of Lemma \ref{lem: special case of sum of dependent random variables}}]
For any $i\in[n]$, for any values $x_1,\cdots,x_{i-1}$ of $X_1,\cdots,X_{i-1}$, we have $\mathbb{E}[X_i~|~X_1=x_1,\cdots,X_{i-1}=x_{i-1}]\leq t_0\cdot\Pr[X_i\leq t_0~|~X_1=x_1,\cdots,X_{i-1}=x_{i-1}]+\sum_{t>t_0}t\cdot\Pr[X_i=t~|~X_1=x_1,\cdots,X_{i-1}=x_{i-1}]\leq t_0+\sum_{t>t_0}t\cdot t^{-13}\leq t_0+1$. Besides, $\Pr[X_i\geq n^{0.2}]\leq\sum_{t\geq n^{0.2}}t^{-13}\leq n^{-2}/12$, for sufficiently large $n$. Take a union bound over the $n$ random variables, we know with high probability in $n$, $X_i\leq n^{0.2}$ holds for each $i\in[n]$. Assume indeed $X_i$ is at most $n^{0.2}$ for each $i\in[n]$. Apply Lemma 3 from \cite{bender20}, we can show with high probability in $n$, $\sum_{i=1}^{n} X_i\leq (t_0+2)n$.
\end{proof}

\begin{proof}[\underline{Missing parts in proof of Lemma \ref{lem: bounding length of a complete interval}}]
Here we provide analysis for scenario two and scenario three, which is very similar to the proof of Lemma 8 in \cite{bender20}.

\textit{Scenario II: $2c_4n<t\leq 8cc_3n$.} Let $\eta=\frac{c_4}{4(t_0+2)}$. We will prove with high probability in $n$ (thus also in $t$), there are at least $\eta n$ successes on the data channel within interval $[1,2c_4n]$. If this holds, in the case that the \textsc{batch} ends in the first $2c_4n$ slots (i.e. $R_{\mathcal{I}}\leq L_{\mathcal{I}}+2c_4n-1$), we have $\bar{l}\leq 2c_4n <t$;\footnote{In such case, it must be that the \textsc{batch} ends due to some success on the control channel, since by Claim~\ref{claim: batch can not be short} it take $\omega(n)$ slots to generate $n$ successes on the data channel. So the end of \textsc{batch} also means the end of the complete interval the lemma is considering.} in the other case that the \textsc{batch} lasts for at least $2c_4n$ slots, we have $\bar{l}=0$ since there are $\eta n\geq t/(32cc_3(t_0+2))$ successes, where the inequality is due to $t<8cc_3n$.

To prove there are at least $\eta n$ successes on the data channel within $[1,2c_4n]$, we begin with some notations. Let $\tau=c_4n$. Let $s_i$ be the number of slots (of the data channel) from $L_{\mathcal{I}}$ to the slot that the $i$-th success of the data channel occurs, and define $s_0=0$.
Further we set $s_i$ as $\tau$ if $s_i$ exceeds $\tau$.
Let $X_i$ be the interval from $(s_{i-1}+1)$-th slot to $s_{i}$-th slot of the data channel, thus $s_i=\sum_{j=1}^{i}|X_j|$. So what we need to prove is $s_{\eta n}\leq \tau$. We also use $n_{X_i}$ and $d_{X_i}$ to denote the number of new arrivals and jammed slots in interval $X_i$, respectively. Recall we assume $n_{[1,2\tau]}\leq n_{[1,t]}\leq t/(64cc_3c_1f'(t))\leq \tau/(8c_1f'(\tau))$ and $d_{[1,2\tau]}\leq d_{[1,t]}\leq d_{\mathcal{I}}\leq t/(64cc_3c_1g'(t))\leq \tau/(8c_1g'(\tau))$.

We define three types of intervals, and for each type bound the total length of the intervals in $\{X_i\}$ belonging to that type. We call an interval $X$ \emph{heavy-arriving} if $n_{X}>|X|/(c_1f'(|X|))$; call $X$ \emph{heavy-jamming} if $d_{X}>|X|/(c_1g'(|X|))$; and call $X$ \emph{light} if $X$ is not heavy-arriving and not heavy-jamming. Let $\mathcal{X}_{HeavyA}$ (respectively, $\mathcal{X}_{HeavyJ}$ or $\mathcal{X}_{Light}$) be the set containing all heavy-arriving intervals (respectively, all heavy-jamming intervals or all light intervals).  (These three sets are not necessarily disjoint.) We know $\sum_{X_i\in \mathcal{X}_{HeavyA}}|X_i|\leq \sum_{X_i\in \mathcal{X}_{HeavyA}}n_{X_i}\cdot c_1f'(|X_i|)\leq \frac{\tau}{8c_1f'(\tau)}\cdot c_1f'(|X_i|)\leq \tau/8$ since $\sum_{X_i\in \mathcal{X}_{HeavyA}} n_{X_i}\leq n_{[1,2\tau]}\leq \tau/(8c_1f'(\tau))$. Similarly we can prove $\sum_{X_i\in \mathcal{X}_{HeavyJ}}|X_i|\leq \tau/8$.

What remains it to bound the total length of light intervals. We further divide light intervals into three groups and bound the total length of intervals within each group. For light intervals that end by slot $n$ (of the data channel), we can bound them trivially since $\sum_{X_i\in \mathcal{X}_{Light} \wedge s_{i}\leq n}|X_i|\leq n\leq \tau/4$ when $c_4\geq 4$.

The second group contains at most one interval $X_i$ that begins before or at slot $n$ (which implies last success happened before slot $n$), and ends after slot $n$. We can show with high probability in $n$ this interval ends by $n+\tau/8\leq \tau/4$ when $c_4\geq 8$, conditioned on any fixed value of $i$ and any fixed values of $s_1,s_2,\cdots,s_{i-1}$, by the following claim.

\begin{claim*}
Fix any $i\leq\eta n$, and any values of $s_1,s_2,\cdots,s_{i-1}$ satisfying $s_{i-1}<n$. Then with high probability in $n$, there is at least a success in $(n,n+\tau/8]$ (thus also in interval $(n,\tau/4]$ when $c_4\geq 8$).
\end{claim*}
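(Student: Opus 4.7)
My plan is to show that the batch nodes alone generate enough contention on the data channel in $(n,n+\tau/8]$ to produce a success with high probability, despite interference from additional nodes and jammed slots. This parallels the proof of Lemma~8 in~\cite{bender20}. Since the claim is invoked on the group-2 interval (where implicitly $s_i>n$, i.e., no success occurs in $(s_{i-1},n]$) and since $s_{i-1}<n$ with $i\leq\eta n$, the number of remaining batch nodes at any slot $k\in(n,n+\tau/8]$, conditioned on no batch-node success yet in $(n,k-1]$, satisfies $m_k\geq n-(i-1)\geq(1-\eta)n$.

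Next, I would verify that the per-slot success probability from batch nodes alone is a positive constant. For $k\leq n+\tau/8=n(1+c_4/8)$, the ratio $m_k/k$ lies in the bounded range $[(1-\eta)/(1+c_4/8),\,1]$, so $(m_k/k)(1-1/k)^{m_k-1}\geq c_0$ for some constant $c_0>0$ depending only on $c_4$ and $\eta$. I would then bound adversarial interference. Under the (implicit) light-interval assumption inherited from the proof of Lemma~\ref{lem: bounding length of a complete interval}, jammed slots in $(n,n+\tau/8]$ number at most $\tau/(8c_1 g'(\tau))\leq\tau/100$ for large enough $c_1$ (using $g'(\tau)\geq 1$ for large $\tau$, from the sub-logarithmic growth of $\log g$). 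For additional-node broadcasts, I would exploit the $f$-\textsc{backoff} structure---stages have doubling lengths---to show that the expected number of slots in $(n,n+\tau/8]$ containing at least one additional-node broadcast is $O(\tau/c_1)$, hence a small fraction of $\tau/8$ by a Markov or Chernoff argument, provided $c_1$ is chosen sufficiently large relative to $a$ and $c_4$.

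Combining these pieces leaves at least a constant fraction (say $\tau/16$) of ``good'' slots in $(n,n+\tau/8]$---non-jammed and free of additional-node broadcasts---with high probability. Because batch-node broadcasts in different slots are independent, the probability that no batch-node success occurs across all good slots is at most $(1-c_0)^{\tau/16}=e^{-\Omega(n)}$, which is negligible in $n$. The main obstacle is the careful accounting of additional-node broadcasts: the oblivious adversary may concentrate arrivals near slot $n$ to maximize per-slot broadcast density there, and one must show the total expected number of interfered slots still remains $O(\tau/c_1)$ under the worst arrival schedule---this is the bookkeeping that most closely mirrors \cite{bender20}. A secondary subtlety is the implicit conditioning $s_i>n$: the claim statement only assumes $s_{i-1}<n$, but the proof naturally targets the group-2 scenario; when $s_i\leq n$, the claim either follows trivially (a success has already landed in $(s_{i-1},n]$ and one can re-apply the same argument from that later success) or is absorbed into the outer counting of successes within scenario II.
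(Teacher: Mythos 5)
Your proposal has a genuine gap in Step~3, where you bound the number of ``interfered'' slots. You claim the expected number of slots in $(n,n+\tau/8]$ containing at least one additional-node broadcast is $O(\tau/c_1)$, and you flag this as the delicate part. Unfortunately that bound is simply false in general. The truncated-length hypothesis only limits the number of additional nodes to roughly $\tau/(8c_1 f'(\tau))$, and each such node running $f$-\textsc{backoff} contributes up to $\Theta(f(\tau)\log\tau)$ broadcasts. When $g$ is a constant (so $f(\tau)=\Theta(\log\tau)$), the total expected number of additional-node broadcasts in $(n,n+\tau/8]$ is $\Theta(\tau\log\tau/c_1)$, which exceeds the interval length $\tau/8$ for any fixed $c_1$. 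Hence almost every slot can be ``interfered,'' and no Chernoff/Markov argument can rescue the claim that a constant fraction of $(n,n+\tau/8]$ is free of additional-node broadcasts. Once that fails, your constant-per-slot success bound (which requires no additional node to broadcast in the slot) gives you nothing.

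This is precisely the phenomenon that Lemma~\ref{lem: batch with additional nodes} is designed to handle, and the paper's proof of this claim just invokes that lemma on $[n+1,n+\tau/8]$ after checking its hypotheses (the batch contention $m_k/k$ stays in the required window, and the light-interval assumptions carry over). Internally, Lemma~\ref{lem: batch with additional nodes} does \emph{not} argue that interfered slots are rare; it splits into two cases. If the additional-node contention on the first half is light, it bounds occupied slots as you propose. If it is heavy, it instead shows that the per-slot contribution from additional nodes is at least $t^{-0.4}$ on the second half while the \emph{total} contribution is $O(t\log t)$, so many slots have total contention in the Goldilocks range $[t^{-0.4},\,(\log t)/8]$ and each such slot succeeds with probability at least $t^{-0.4}$. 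Your approach only covers the first case. To fix the proof you should either invoke Lemma~\ref{lem: batch with additional nodes} directly (as the paper does) or reproduce its two-case contention analysis; a direct ``good slot'' counting argument does not suffice.
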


\begin{proof}
Notice interval $[1,n+\tau/8]$ is light, since $n_{[1,2\tau]}\leq \frac{\tau/8}{c_1f'(\tau/8)}$ and $d_{[1,2\tau]}\leq \frac{\tau/8}{c_1g'(\tau/8)}$. We intend to apply Lemma \ref{lem: batch with additional nodes} on interval $[n+1,n+\tau/8]$ and argue its conditions are satisfied. When $n$ is sufficiently large with respect to $c$ and $c_4$, we have $n\cdot \frac{1}{n}\leq\frac{\log(\tau/4)}{c'}$ and $(1-\eta)n\cdot\frac{1}{n+\tau/8}\geq\frac{0.9n}{\tau/4}\geq\frac{c'\log(\tau/4)}{\tau/4}$, thus the first condition for applying the lemma is satisfied. Moreover, since interval $[1,n+\tau/8]$ is light, it is easy to verify the second condition for applying the lemma is also satisfied. Therefore, by Lemma \ref{lem: batch with additional nodes}, with high probability in $n$, there is a success in $[n+1,n+\tau/8]$.
\end{proof}

The third group of light intervals consists of the intervals that begin after slot $n$. We need the following claim and Lemma \ref{lem: special case of sum of dependent random variables} to show $\sum_{X_i\in \mathcal{X}_{Light} \wedge s_{i-1}> n}|X_i|\leq \eta \cdot n \cdot (t_0+2) \leq c_4n/4=\tau/4$.

\begin{claim*}
Fix any $i\leq\eta n$, and any values of $s_1,s_2,\cdots,s_{i-1}$ satisfying $s_{i-1}\geq n$. Recall each $s_j$ is at most $c_4n$. For any $x\geq t_0$, with high probability in $x$, $\Pr(|X_i|>x\wedge X_i\in\mathcal{X}_{Light})\leq 1/x^{\Omega(1)}$, where $t_0$ is a sufficiently large constant.
\end{claim*}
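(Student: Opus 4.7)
My plan is to decompose $\Pr(|X_i|>x\wedge X_i\in\mathcal{X}_{Light})$ by the value of $|X_i|=y>x$, apply Lemma~\ref{lem: batch with additional nodes} at each scale $y$ to bound each term, and sum. I would first condition on the entire history up to the end of slot $s_{i-1}$; since every subsequent slot prior to the next success delivers feedback indistinguishable from silence, the adaptive adversary reduces to an oblivious one for this analysis (as in the remark before Lemma~\ref{lem: batch with additional nodes}), so that its arrival and jamming schedules on slots $>s_{i-1}$ become deterministic.

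Next, for each target length $y$ with $s_{i-1}+y\leq 2\tau$, I would set $I_y:=(s_{i-1},s_{i-1}+y]$ and verify the three hypotheses of Lemma~\ref{lem: batch with additional nodes} on $I_y$. Conditioned on no success occurring in $I_y$ yet, at least $(1-\eta)n$ batch nodes remain (since $i\leq\eta n$), each sending in slot $s_{i-1}+j$ with probability $1/(s_{i-1}+j)\in[1/(2c_4n),1/n]$ (using $s_{i-1}\in[n,\tau]$). The per-slot batch contention thus lies in the constant range $[(1-\eta)/(2c_4),\,1]$, and for $y\geq t_0$ with $t_0$ a sufficiently large constant this satisfies $c'\log y/y\leq q_j p_j\leq\log y/c'$ together with $p_j\leq 1/2$, verifying condition~(1). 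Condition~(3) is trivial since the adversary's nodes run $f$-\textsc{backoff}. For condition~(2), I would combine the outer-scenario assumption $n_{[1,t]}\leq t/(64cc_3c_1f'(t))$ and $d_{[1,t]}\leq t/(64cc_3c_1g'(t))$ with the lightness constraints $n_{X_i}\leq|X_i|/(c_1f'(|X_i|))$ and $d_{X_i}\leq|X_i|/(c_1g'(|X_i|))$: when $|X_i|=y$, all $X_i$-arrivals/jams lie in $I_y$, and choosing $c_1$ sufficiently large together with the sub-logarithmic growth of $f,g$ makes the combined bound fit under $y/(100f(y))$ and $y/100$ uniformly in $y\in[t_0,\,2\tau-s_{i-1}]$.

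Applying Lemma~\ref{lem: batch with additional nodes} on $I_y$ then yields $\Pr(\text{no success in }I_y)\leq 1/y^{2C}$ for a constant $C$ that may be taken arbitrarily large. Decomposing and summing gives
\[
\Pr(|X_i|>x\wedge X_i\in\mathcal{X}_{Light})\;\leq\;\sum_{y>x}\Pr(|X_i|\geq y\wedge X_i\in\mathcal{X}_{Light})\;\leq\;\sum_{y>x}\frac{1}{y^{2C}}\;=\;O\!\left(\frac{1}{x^{2C-1}}\right),
\]
which is $1/x^{\Omega(1)}$ with the implicit exponent at least $13$, precisely the form needed in order to feed this claim into Lemma~\ref{lem: special case of sum of dependent random variables} in the outer argument.

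The main obstacle lies in the uniform verification of condition~(2) of Lemma~\ref{lem: batch with additional nodes} across all scales $y\geq t_0$: the outer-scenario bound is of the form $\Theta(t/(c_1f'(t)))$ while the lemma requires the strictly tighter $y/(100f(y))$, and these two only coincide when $y$ is a constant fraction of $t$. Handling small $y$ requires exploiting that ``new'' arrivals in $I_y$ are bounded via lightness, while ``old'' additional nodes (injected long before $s_{i-1}$) have already spread into long $f$-\textsc{backoff} stages and contribute only $O(f(\tau)/\tau)$ per slot, so their aggregate contention over $I_y$ is absorbed by the slack in $c_1$. A secondary subtlety---the boundary $s_{i-1}+y>2\tau$---is handled automatically since $s_i$ is capped at $\tau$, so $|X_i|\leq\tau-s_{i-1}$ and the summation naturally terminates.
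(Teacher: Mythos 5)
Your decomposition-and-sum strategy is a valid upper bound, but it is needlessly wasteful (the first summand $y=x+1$ already equals the target quantity), and more importantly it forces you to verify Lemma~\ref{lem: batch with additional nodes}'s condition~(2) at \emph{every} scale $y>x$. That verification is exactly what you flag as the ``main obstacle,'' and your proposed repair (a white-box argument about the per-slot contention of old $f$-\textsc{backoff} nodes, absorbed by slack in $c_1$) is not a valid black-box application of the lemma---condition~(2) is a \emph{count} of injected nodes over $[1,R]$, not a contention bound, so nothing you derive about their per-slot send probability certifies it. There is also a mismatch in the lightness event: the hypothesis $X_i\in\mathcal{X}_{Light}$ controls arrivals/jams in $(s_{i-1},s_{i-1}+|X_i|]$, not in a shorter prefix $I_y=(s_{i-1},s_{i-1}+y]$ when $|X_i|>y$, so you cannot invoke lightness to certify $I_y$ either.

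The paper's proof sidesteps all of this with one extra idea you are missing: let $x'\geq x$ be the \emph{smallest} integer such that $(s_{i-1},s_{i-1}+x']$ is light (a deterministic quantity once you condition on the history, since the adversary becomes oblivious). If $|X_i|>x$ and $X_i$ is light, then $|X_i|$ itself is a candidate for $x'$ and hence $|X_i|\geq x'$, so there is no success in $(s_{i-1},s_{i-1}+x')$. Now you apply the lemma exactly once, to the interval of length $x'$, whose lightness (together with appropriately large $c_1$, $t_0\geq 2^{c'}$ and $t_0\geq(2c'c_4)^2$) is precisely what makes conditions (1) and (2) hold. The failure probability is then $1/(x')^{\Omega(1)}\leq 1/x^{\Omega(1)}$ directly---no summation over scales, no uniform verification. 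Your instinct to reduce to Lemma~\ref{lem: batch with additional nodes} is right, and your per-slot-contention calculation for condition~(1) matches the paper's, but without the ``smallest light $x'$'' device your argument does not close.
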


\begin{proof}
Let $x'\geq x$ be the smallest integer such that $(s_{i-1},s_{i-1}+x']$ is light. It suffices to upper bound the probability
$$\Pr\left((s_{i-1},s_{i-1}+x']\textrm{ is light }\wedge(s_{i-1},s_{i-1}+x')\textrm{ contains no success }\right)$$
We intend to apply Lemma \ref{lem: batch with additional nodes} on interval $(s_{i-1},s_{i-1}+x']$ and argue its conditions are satisfied. Specifically, since $(s_{i-1},s_{i-1}+x']$ is light, the second condition is satisfied. On the other hand, if we set $t_0\geq 2^{c'}$, then $n\cdot\frac{1}{n}\leq\frac{\log x'}{c'}$ since $x'\geq t_0$. Moreover, if we set $t_0\geq(2c'c_4)^2$, then $0.9n\cdot\frac{1}{c_4n}\geq\frac{0.9}{c_4}\geq\frac{c'\log x'}{x'}$ since $n-\eta n\geq 0.9n$ and $x'\geq t_0$. Hence, the first condition is also satisfied. Now, apply Lemma \ref{lem: batch with additional nodes} on interval $(s_{i-1},s_{i-1}+x']$, we conclude there is a success within interval $(s_{i-1},s_{i-1}+x']$ with high probability in $x'$ (thus also in $x$).
\end{proof}

We can now conclude the proof for scenario two. Specifically, $s_{\eta n}\leq\tau/8+\tau/8+\tau/4+\tau/4+\tau/4\leq\tau$ by summing the total length of each type of interval. Therefore, there are at least $\eta n$ successes in the first $\tau$ slots of the data channel, with high probability in $n$ (thus also in $t$), implying $\bar{l}=0$.

\textit{Scenario III: $t\leq 2c_4n$.} When $c_3\geq 12c_4$, we have $2c_4n\leq c_3n/6$, so assume $t\leq c_3n/6$. In the case that there are at least $2n/3=\frac{2nc_4}{3c_4}\geq\frac{t}{3c_4}\geq t/(32cc_3(t_0+2))$ successes within time $[1,t]$, we have $\bar{l}=0<t$. Otherwise, there are at least $n/3$ batch nodes within time $[1,t]$, and we argue there is no success in the first $t/2$ slots on the control channel, with high probability in $n$. This is because in each of the $t/2$ slots, the contention of batch nodes is at least $n/3\cdot\frac{c_3\log(c_3n/6)}{c_3n/6}\geq 2\log(c_3n/6)\in\Omega(\log{n})$. Therefore, in this case, with high probability in $n$ (thus also in $t$), $R_{\mathcal{I}}-L_{\mathcal{I}}+1>t$ (since there is no success on control channel and there are active nodes in the system). In conclusion, for this scenario, with high probability in $t$, the value of $\bar{l}$ is either more than $t$ or $0$.
\end{proof}

\end{document}